\documentclass[12pt]{amsart}
\usepackage{amssymb,amsmath,mathrsfs,enumerate,mathtools,comment}
\usepackage{tikz-cd}
\usepackage[pdftex,bookmarks=true]{hyperref}
\usepackage{adjustbox}
\usepackage{hhline}
\usepackage{enumerate,graphicx}
\newtheorem*{theorem*}{Theorem}
\theoremstyle{plain}

\newtheorem{theorem}{Theorem}[section]

\newtheorem{proposition}[theorem]{Proposition}
\newtheorem{lemma}[theorem]{Lemma}
\newtheorem{corollary}[theorem]{Corollary}
\theoremstyle{definition}

\newtheorem{example}[theorem]{Example}
\newtheorem{definition}[theorem]{Definition}
\newtheorem{remark}[theorem]{Remark}
\numberwithin{equation}{section}

\newcommand{\abs}[1]{\lvert#1\rvert}
\newcommand{\norm}[1]{\lVert#1\rVert}
\newcommand{\bigabs}[1]{\bigl\lvert#1\bigr\rvert}

\usepackage{array}

\usepackage{bbm}
\usepackage{float}

\numberwithin{equation}{section}

\newcommand{\N}{{\mathbb N}}
\newcommand{\R}{{\mathbb R}}

\newcommand{\E}{{\mathbb E}}

\newcommand{\cF}{{\mathcal F}}

\newcommand{\eqdist}{\stackrel{\rm d}{=}}

\newcommand{\bP}{{\mathbb P}}

\newcommand{\cX}{{\mathcal X}}

\newcommand{\one}{\mathbf{1}}

\newcommand{\es}{\operatorname{ess\ sup}}

\title{On Prudence of Risk Measures}

\author[N.~Gao]{Niushan Gao}
\address{Department of Mathematics, Toronto Metropolitan University, 350 Victoria Street, Toronto, Canada M5B 2K3}
\email{niushan@torontomu.ca}

\author[D.~Leung]{Denny H.~Leung}
\address{Department of Mathematics, National University of Singapore, Singapore 117543}
\email{dennyhl@u.nus.edu}

\author[F.~Xanthos]{Foivos Xanthos}
\address{Department of Mathematics, Toronto Metropolitan University, 350 Victoria Street, Toronto, Canada M5B 2K3}
\email{foivos@torontomu.ca}

\thanks{The first and third authors acknowledge   support from NSERC   Discovery Grants.}

\keywords{Risk measures, prudence, weak prudence, super Fatou property, cash-additive hulls, inf-convolutions}

\subjclass[2010]{ 91B30, 91G70, 46E30, 46A55}

\date{\today}

\begin{document}

\maketitle
\begin{abstract}
Prudence is a stability property of risk functionals recently introduced by Wang and Zitikis and subsequently studied by Amarante and Liebrich. In this paper, we first establish general relationships between prudence and other stability properties, showing, in particular, that weak prudence and prudence coincide for a broad class of convex, law-invariant functionals. We then prove that prudence is preserved by cash-additive hulls of star-shaped functionals under a simple asymptotic condition, and by inf-convolutions of convex, cash-additive, law-invariant prudent functionals. Our results   provide general methods for constructing prudent risk measures from existing prudent functionals.
\end{abstract}

\section{Introduction} 
In practice, risk positions are often approximated by positions with simpler structures, under the assumption that the corresponding risk-measure values exhibit suitable approximation properties. Specifically, let $\cX$ be a collection of random variables representing loss positions, and let $\rho:\cX\rightarrow(-\infty,\infty]$ be a risk functional. Let $(X_n)_{n=1}^\infty \subset \cX$ and $X\in\cX$ be such that $(X_n)$ converges to $X$ in some sense. It is then desirable that $\rho(X)$ can be controlled  by a suitable  limit of the sequence $(\rho(X_n))$. Properties of this type are sometimes referred to as stability properties. 

In the recent literature, Wang and Zitikis \cite{WZ21} introduced the notion of prudence  in their study of axiomatic characterizations of  Expected Shortfall.
A functional $\rho:\cX\rightarrow (-\infty,\infty]$ is said to be \emph{prudent} if $$\rho(X)\leq \lim_n\rho(X_n)\quad  \text{whenever } X_n\xrightarrow{a.s.}X\text{ and }\lim_n\rho(X_n)\text{ exists in }[-\infty, \infty].$$
The motivation is that, when   a sequence $(X_n)$ is used to approximate $X$, the quantity $\lim_n\rho(X_n)$ serves as an upper bound for $\rho(X)$. Prudent risk measures can therefore serve useful regulatory purposes. The importance of prudence is further underscored by the fact that it was identified in \cite{WZ21} as one of the four axioms characterizing Expected Shortfall. Recently, Amarante and Liebrich \cite{AL24} studied this property in depth for distortion risk measures and general risk functionals. In their analysis, they also observed connections between prudence and other stability properties. In particular, prudence is  equivalent to the super Fatou property. 

The super Fatou property was originally introduced  by Gao and Munari \cite{GM20} in the study of  surplus-invariant risk measures. More generally, many stability properties can be formulated within a unified framework. Let  $\eta$ denote  a mode of  convergence on $\cX$. A common stability requirement is that
\begin{align}
\label{stability}
    \rho(X)\leq \liminf_n\rho(X_n),\quad \text{whenever }X_n\xrightarrow{\eta}X,
\end{align} 
 In other words, $\rho$ is required to be lower semicontinuous with respect to $\eta$. 
 
When $\eta$ is the convergence induced by a  topology on $\cX$, for example the norm topology, this condition reduces to  the usual notion of lower semicontinuity with respect to that topology. When $\eta$ is dominated almost sure convergence in $\cX$, that is, $X_n\xrightarrow{\eta}X$ means that  $X_n\stackrel{a.s.}{\longrightarrow}X$ and there exists $Y\in \cX$ such that $\abs{X_n}\leq Y$ a.s.\ for all $n\in\N$, it becomes the \emph{Fatou property} of $\rho$, which  was introduced by Delbaen \cite{D02} in the study of dual representations  of risk measures on $L^\infty$   to ensure that the dual elements in the representation can be chosen from $L^1$. 
This property has since been  extensively studied in the literature, see, e.g., Chen et al.~\cite{CGLL22}, Gao et al.~\cite{GLMX18}, Gao et al.~\cite{GLX19}, Gao and Xanthos \cite{GX18}, Jouini et al.~\cite{JST06}, Liebrich and Munari \cite{LM25}. 

When $\eta$ is almost sure  convergence, the resulting property  is called the \emph{super Fatou property}. 
Recall that $\rho$ is said to be \emph{weakly lower semicontinuous} if $\eta$ is convergence in distribution  in \eqref{stability}. As observed by Gao, Munari and Xanthos \cite[Remark 3.7]{GMX20}, weak lower semicontinuity is equivalent to the super Fatou property for law-invariant functionals, owing to the  classical Skorohod representation theorem.   Hence, for law-invariant functionals, prudence, the super Fatou property, and weak lower semicontinuity are equivalent notions. This equivalence highlights the fundamental role of prudence among stability properties of risk functionals   and naturally raises questions regarding its characterization and preservation.

Beyond the fact that Expected Shortfall is prudent, \cite{GMX20} raised the more general question of whether Haezendonck--Goovaerts risk measures are prudent. This question was answered affirmatively by Amarante and Liebrich \cite{AL24}. Nevertheless, despite its close connections with several classical stability properties, relatively little is known about the structural properties of prudence and about operations that  generate or preserve prudent risk measures.

The present paper contributes to this line of research from several perspectives.
Section 2 investigates the relationship between  prudence and other stability properties. In particular, we establish the equivalence between weak prudence and prudence for convex, law-invariant functionals under a natural and mild condition. This result provides new insight into a question raised by Amarante and Liebrich \cite{AL24} concerning the relationship between the two notions. 

Section~3 studies the preservation of prudence under the cash-additive hull operation and establishes a broad and easily verifiable sufficient condition. The cash-additive hull is a standard tool for constructing cash-additive risk measures from more general functionals, and notable examples obtained in this way include Expected Shortfall and Haezendonck--Goovaerts risk measures. Consequently, our result provides a convenient method for constructing prudent cash-additive risk measures from prudent star-shaped functionals.

Finally, Section~4 proves that prudence is preserved by inf-convolutions of convex, cash-additive, law-invariant functionals. Since inf-convolution is a fundamental operation in capital allocation and   risk sharing, this result shows that prudence is stable under a natural aggregation mechanism and identifies a broad class of prudent functionals that remain prudent under aggregation.

\subsection{Notation and Preliminary Facts}
Throughout the paper, $(\Omega,\mathcal{F},\mathbb{P})$ stands for a non-atomic probability space. 
Let $L^0:=L^0(\Omega)$ be the space of all random variables on $(\Omega,\mathcal{F},\mathbb{P})$.
A vector subspace  $\cX$ of $L^0$ is called a \emph{lattice ideal} if $Y\in\cX$ whenever $X\in \cX$ and $\abs{Y}\leq \abs{X}$. A lattice ideal $\cX$ is called a \emph{Banach function space} if it is equipped with a complete norm satisfying that $\norm{Y}\leq\norm{X} $ whenever $\abs{Y}\leq \abs{X}$ in $\cX$. A Banach function space $\cX$ is called a {\em rearrangement-invariant (r.i.) space} if  $X\in \cX$ and $Y \eqdist X$ imply   $Y\in \cX$ and $\|Y\|=\|X\|$. Here, $Y\eqdist X$  means that the two random variables $X$ and $Y$ have the same distribution.  It is well-known that every r.i.\ space $\cX$ over a non-atomic probability space satisfies $$L^\infty\subset \cX\subset L^1.$$

\smallskip

Let $\cX$ be a lattice ideal of $L^0$ containing $L^\infty$. Let $\mathbf{1}$ be the constant one function. A proper functional $\rho:\cX\rightarrow(-\infty,\infty]$ is called a \emph{coherent risk measure} if it is
\begin{enumerate}
    \item[(i)] monotone, i.e., $\rho(X_1)\leq \rho(X_2)$ whenever $X_1,X_2\in\cX$ satisfies $X_1\leq X_2$,
    \item[(ii)] subadditive, i.e., $\rho(X_1+X_2)\leq \rho(X_1)+\rho(X_2)$ for any $X_1,X_2\in\cX$,
    \item[(iii)] positive homogeneous, i.e., $\rho(\lambda X)=\lambda \rho(X)$ for any $X\in\cX$ and   $\lambda\geq 0$,
    \item[(iv)] cash additive, i.e., $\rho(X+m \mathbf{1})=\rho(X)+m$ for any $X\in\cX$ and $m\in\R$.
\end{enumerate}
A functional $\rho:\cX\rightarrow(-\infty,\infty]$ is said to be \emph{convex} if $\rho\big(\lambda X_1+(1-\lambda)X_2\big)\leq \lambda \rho(X_1)+(1-\lambda)\rho(X_2)$ for any $X_1,X_2\in\cX$ and $\lambda\in[0,1]$. $\rho$ is said to be \emph{law invariant} if $\rho(X_1)=\rho(X_2)$ for any $X_1,X_2\in\cX$ with $X_1 \eqdist X_2$.

Finally, recall that a functional $\rho:\cX\rightarrow (-\infty,\infty]$ has the super Fatou property if $\rho(X)\leq \liminf_n\rho(X_n)$ whenever $X_n\xrightarrow{a.s.}X$. As observed in \cite{AL24}, prudence is equivalent to the super Fatou property. Accordingly, many arguments in this paper establish prudence by verifying the super Fatou property.

 \section{Stability properties of risk functionals}

Let $\cX$ be a lattice ideal of $L^0$.
Recall  from \cite{AL24} that a functional $\rho:\cX\rightarrow (-\infty,\infty]$ is said to be \emph{weakly prudent} if $\rho(X)\leq \lim_n\rho(X_n)$ whenever $X_n\xrightarrow{a.s.}X$ and $\lim_n\rho(X_n)$ exists in $\mathbb{R}$. Prudence requires the same inequality whenever
$\lim_n\rho(X_n)$ exists in $[-\infty,\infty]$. Thus, prudence is a priori stronger than weak prudence. A natural question is whether the two notions are  equivalent under additional   assumptions. In particular, Amarante and Liebrich \cite[Sect.~3.2]{AL24} asked  whether weak prudence and prudence are  equivalent for law-invariant functionals. 
The main result of this section  provides a broad sufficient condition for an affirmative answer.

To state the result, we introduce a mild boundedness condition.  A functional $\rho:\cX\rightarrow (-\infty,\infty]$ is said to be \emph{order bounded below} if $$\inf\big\{\rho(X):X\in[U,V]\big\}>-\infty$$ for every order interval $[U,V]:=\{X\in\cX:U\leq X\leq V\}$, where $U,V\in\cX$ satisfy $U\leq V$. Order-boundedness-type conditions have recently emerged as useful substitutes for monotonicity; see Gao and Xanthos \cite{GX25} for a discussion.

\begin{lemma}\label{(P) to (F)}
Let $\cX$ be a lattice ideal of $L^0$.
Assume that $\rho:\cX\to (-\infty,\infty]$ is  weakly prudent and order bounded below.  Then $\rho$ has the Fatou property.
\end{lemma}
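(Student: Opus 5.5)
We argue by contradiction. Suppose $\rho$ fails the Fatou property. Then there are $X\in\cX$, $Y\in\cX$ and a sequence $(X_n)\subset\cX$ with $\abs{X_n}\leq Y$ for all $n$, $X_n\xrightarrow{a.s.}X$, and $\liminf_n\rho(X_n)<\rho(X)$. The plan is to pass to a subsequence along which $\rho(X_n)$ converges to a \emph{finite} limit strictly below $\rho(X)$, and then invoke weak prudence. This contradicts the conclusion $\rho(X)\leq\lim_n\rho(X_n)$.

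First, pick a subsequence $(X_{n_k})$ with $\rho(X_{n_k})\to L:=\liminf_n\rho(X_n)\in[-\infty,\infty)$; here $L<\rho(X)\leq\infty$, so $L<\infty$. The only issue is that $L$ might be $-\infty$, and weak prudence only handles real limits. This is where order boundedness below enters. Every $X_n$ lies in the order interval $[-Y,Y]$, and $-Y\leq Y$ with both in $\cX$ because $\cX$ is a lattice ideal (so $-Y\in\cX$). Order boundedness below therefore gives
\[
\rho(X_n)\geq\inf\{\rho(Z):Z\in[-Y,Y]\}>-\infty \quad\text{for all } n.
\]
Hence $L>-\infty$, and $L\in\R$.

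Now $X_{n_k}\xrightarrow{a.s.}X$ and $\lim_k\rho(X_{n_k})=L$ exists in $\R$. Weak prudence yields $\rho(X)\leq L$, contradicting $L<\rho(X)$. So $\rho(X)\leq\liminf_n\rho(X_n)$, which is the Fatou property.

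The only step needing care is excluding $L=-\infty$, and order boundedness below handles it directly. One should also note that $\rho(X_n)$ can be $+\infty$ for some $n$. This does no harm, because the subsequence is chosen to converge to the liminf, which is finite.
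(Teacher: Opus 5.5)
Your proof is correct and follows essentially the same route as the paper: pass to a subsequence realizing the $\liminf$, use order boundedness below on $[-Y,Y]$ to rule out $-\infty$, and apply weak prudence. The only difference is cosmetic: you exclude the case $L=+\infty$ through the contradiction hypothesis, whereas the paper argues directly and treats $L=\infty$ as trivial.
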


\begin{proof}
Let $(X_n)\subset \cX$ and $X\in\cX$ be such that $X_n\xrightarrow{a.s.}X$ and  $\abs{X_n}\leq Y$ for some $Y\in \cX$ and all $n\geq 1$. By passing to a subsequence, we may assume that $\lim_n\rho(X_n)=\liminf_n\rho(X_n)\in[-\infty,\infty]$. Since  $(X_n)\subset [-Y,Y]$ and $\rho$ is order bounded below,  we have $\inf_n\rho(X_n)>-\infty$. Therefore,   $\lim_n\rho(X_n)\in(-\infty,\infty]$. If $\lim_n\rho(X_n)=\infty$, then trivially $\rho(X)\leq \lim_n\rho(X_n)$. Otherwise, $\lim_n\rho(X_n)\in\mathbb{R}$, and  weak prudence yields $\rho(X)\leq \lim_n\rho(X_n)=\liminf_n\rho(X_n)$. Hence, $\rho$ has the Fatou property.
\end{proof}

We now present the main result of this  section. It is applicable not only to measures of risk but also to measures of variability (see, e.g., Bellini et al.~\cite{BFWW22}).

\begin{theorem}\label{wekp=p}
Suppose that $\mathcal{X}$ is an r.i.\ space and let $\rho:\mathcal{X}\to \mathbb{R}$ be a law-invariant, weakly prudent functional. Assume that $\rho$ is convex and order bounded below.  Then $\rho$ is prudent. 
\end{theorem}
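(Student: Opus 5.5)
The plan is to argue by contradiction, using only convexity, real-valuedness and weak prudence; law invariance, the r.i.\ structure and order boundedness below should be inessential for this particular reduction. We need $\rho(X)\le \lim_n\rho(X_n)$ whenever $X_n\xrightarrow{a.s.}X$ and the limit exists in $[-\infty,\infty]$. If the limit is $+\infty$ this is trivial, and if it is finite it is exactly weak prudence. So the only case to exclude is $\lim_n\rho(X_n)=-\infty$ for some sequence with $X_n\xrightarrow{a.s.}X$.

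Suppose this case occurs. The idea is to replace $X_n$ by a point on the segment between $X$ and $X_n$ at which $\rho$ takes a prescribed finite value below $\rho(X)$; this produces a new a.s.\ convergent sequence with a finite limit, contradicting weak prudence. For each $n$ set
\[
f_n(t):=\rho\big(X+t(X_n-X)\big),\qquad t\in[0,1].
\]
Each $f_n$ is a real-valued convex function on $[0,1]$ with $f_n(0)=\rho(X)$ and $f_n(1)=\rho(X_n)$. The key observation is that for any $t_n\in[0,1]$, the points $Y_n:=X+t_n(X_n-X)$ still satisfy $Y_n\xrightarrow{a.s.}X$, because $\abs{Y_n-X}\le\abs{X_n-X}\to0$ a.s.

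The main obstacle is choosing $t_n$ by an intermediate value argument. A real convex function on $[0,1]$ is continuous on $(0,1]$ but may jump downward at $0$. By convexity $\lim_{t\downarrow0}f_n(t)$ exists, and it is finite because a convex function on a bounded interval is bounded below by an affine minorant. I would split into two cases.

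\emph{Case 1: for some $n$, $\lim_{t\downarrow0}f_n(t)=:c<\rho(X)$.} Fix this $n$ and put $Z_k:=X+\tfrac1k(X_n-X)$. Then $Z_k\to X$ a.s.\ (indeed pointwise) and $\rho(Z_k)\to c\in\R$. Weak prudence gives $\rho(X)\le c$, a contradiction.

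\emph{Case 2: for every $n$, $f_n$ is continuous on all of $[0,1]$.} For $n$ large we have $\rho(X_n)<\rho(X)-1$, so the intermediate value theorem gives $t_n\in(0,1)$ with $f_n(t_n)=\rho(X)-1$. Then $Y_n=X+t_n(X_n-X)\xrightarrow{a.s.}X$ and $\lim_n\rho(Y_n)=\rho(X)-1\in\R$. Weak prudence yields $\rho(X)\le\rho(X)-1$, again a contradiction.

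Hence $\lim_n\rho(X_n)=-\infty$ is impossible, and $\rho$ is prudent. The step I would double-check most carefully is the one-sided behaviour of the real-valued convex $f_n$ at $t=0$: that the right limit exists and is finite, and that no downward jump is missed. If that step turned out to be delicate, I would fall back on Lemma~\ref{(P) to (F)} together with the order boundedness below to control $\rho$ along these segments.
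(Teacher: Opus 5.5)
Your proof is correct, and it takes a genuinely different and much shorter route than the paper. One inaccuracy should be fixed. A real-valued convex function on $[0,1]$ need not be continuous on $(0,1]$, since its value at either endpoint may exceed the one-sided limit. Here, however, $t\mapsto\rho(X+t(X_n-X))$ is a real-valued convex function on all of $\mathbb{R}$, because $\rho$ is finite on the vector space $\mathcal{X}$. It is therefore continuous on $\mathbb{R}$. So your Case~1 never occurs, and Case~2 is the entire proof.

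The paper argues differently. It first obtains the Fatou property from Lemma~\ref{(P) to (F)}, which is where order boundedness below enters. Via \cite{ML20} and \cite{BKMS21}, it then gets Jensen-type inequalities such as $\rho(Z)\geq\rho(\E[Z]\one)$, and their conditional versions, for the law-invariant convex $\rho$. If $f(t)=\rho(t\one)$ is constant, this gives $\rho\geq C$ and rules out the limit $-\infty$. Otherwise it takes Egoroff sets $\Omega_n$ and replaces $X_n$ by a constant $t$ on $\Omega_n^c$. Those inequalities, together with coercivity of the non-constant convex $f$, show that this one-parameter family crosses the level $a$.

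Your interpolation along the segment from $X$ to $X_n$ gets a.s.\ convergence for free from $\abs{Y_n-X}\leq\abs{X_n-X}$. It also replaces all of the law-invariance machinery by continuity of a finite convex function of one variable. Consequently your argument shows that weak prudence and prudence coincide for every real-valued convex functional on any vector subspace of $L^0$. It uses no law invariance, no rearrangement invariance of $\mathcal{X}$, and no order boundedness below.

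What the paper's construction buys is that it perturbs only by constants on $\Omega_n^c$, i.e., by bounded random variables. This is what lets it extend, as in the Remark following the theorem, to $(-\infty,\infty]$-valued $\rho$ that are continuous under $\norm{\cdot}_\infty$-convergence. On a segment from $X$ to $X_n$ with $X_n-X$ unbounded, an extended-valued convex function can jump to $+\infty$ and skip the level $a$. So your argument does not directly cover that variant.
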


\begin{proof}
Define  $f:\R\to\R$ by $f(t)=\rho(t\mathbf{1})$ for $t\in\R$. Suppose first that $f$ is a constant function, say, $f\equiv C$. Observe that $\rho$ is convex, law invariant and has the Fatou property by Lemma \ref{(P) to (F)}. By \cite[Theorem 3.1]{ML20}, every proper convex law-invariant functional with the Fatou property is $\sigma(\cX,L^\infty)$-lower semicontinuous.   Hence, by \cite[Proposition 5.6]{BKMS21}, 
$$ \rho(X)\geq \rho(\E[X]\mathbf{1})=f(\E[X])=C,\qquad \forall X\in\cX.$$
Thus, $\rho$ is bounded below by $C$ on $\cX$. The desired prudence now follows immediately from   weak prudence.

Next, assume that $f$ is non-constant. 
Let $(X_n)\subset \cX$ and $X\in\cX$  be such that $X_n\to X$ a.s. Assume, for contradiction, that $\rho(X)> \liminf_n\rho(X_n)$.   By passing to a subsequence, we may choose $a \in \R$ such that $$\rho(X_n) < a < \rho(X),\qquad \forall n\geq 1.$$
By Egoroff's theorem, passing to a further subsequence if necessary, we may assume   that there is a sequence $(\Omega_n)$ of measurable sets such that 
\begin{align}\label{temf4}\Omega_n \uparrow \Omega, \qquad \bP(\Omega_n)<1,\qquad  |X_n-X|\mathbf{1}_{\Omega_n} \leq \frac{\mathbf{1}_{\Omega_n}}{n},\qquad \forall n\geq 1.\end{align}

For each $n\geq 1$, define $f_n:\R\to \R$ by
\[  f_n(t) = \rho(X_n\mathbf{1}_{\Omega_n}+ t\mathbf{1}_{\Omega_n^c}),\qquad\forall t\in\R.\]
On the one hand,    by \cite[Theorem 3.1]{ML20} and \cite[Proposition 5.6]{BKMS21}, 
\begin{align}\label{temf2}f_n(c_n) \leq \rho(X_n)<a,\quad \text{where } c_n = \E[X_n\mathbf{1}_{\Omega_n^c}]/\bP(\Omega_n^c).\end{align}
On the other hand,  by \cite[Theorem 3.1]{ML20} and \cite[Proposition 5.6]{BKMS21}  again,
\begin{align}\label{temf1}
   f_n(t) \geq \rho(r_t\mathbf{1})=f(r_t),\quad \text{where }r_t = \E[X_n\mathbf{1}_{\Omega_n}+ t\mathbf{1}_{\Omega_n^c}].
\end{align}
Since $\rho$ is convex, $f$ is  convex   on $\R$. As $f$ is non-constant, either $\lim_{t\to \infty}f(t)=\infty$ or $\lim_{t\to-\infty} f(t)=\infty$. Therefore,  by \eqref{temf1}, there exists $d_n\in\R$ such that 
\begin{align}\label{temf3}f_n(d_n)>a.\end{align}
Since $f_n:\R\to\R$ is convex and real-valued, it is  continuous on $\R$. Combining \eqref{temf2} and \eqref{temf3}, we obtain $t_n\in\R$ such that $$f_n(t_n)=a.$$
Set $Y_n = X_n\mathbf{1}_{\Omega_n}+ t_n\mathbf{1}_{\Omega_n^c}$.
Then $\rho(Y_n) = f_n(t_n) =a$ for all $n\geq 1$ and  $Y_n\to X$ a.s.\ by \eqref{temf4}.
Since $\rho$ is weakly prudent, $\rho(X) \leq \lim_n\rho(Y_n) =a$, contradicting the choice of $a$.
\end{proof}

\begin{remark}
The conclusion of Theorem \ref{wekp=p} remains valid if $\rho$ is allowed to  take values in $(-\infty,\infty]$ and satisfies  $$\rho(X_n)\to \rho(X)\qquad \text{whenever}\qquad \|X_n-X\|_\infty \to 0.$$
The same proof applies, since this continuity assumption ensures the existence of the intermediate values $t_n$ used in the argument.
\end{remark}

We next study when the Fatou property implies  prudence (equivalently, the super Fatou property) under   additional assumptions. One  sufficient condition is formulated in terms of monotonicity  and solidity. For convenience, we collect the relevant notions below.
\begin{definition}\label{solidmono}
Let $\cX$ be a lattice ideal of $L^0$. We say that a functional $\rho:\cX \to (-\infty,\infty]$ is 
\begin{enumerate}
\item[(M)] monotone if $\rho(X) \leq \rho(Y)$ whenever $X \leq Y$;
\item[(M+)] positive monotone if $\rho(X) \leq \rho(Y)$ whenever $0 \leq X \leq Y$;
\item[(S)] solid if $\rho(X) = \rho(|X|)$ for all $X \in \cX$;
\item[(S+)] positive solid if $\rho(X) = \rho(X^+)$ for all $X \in \cX$.
\end{enumerate}
\end{definition}

We note that positive solidity was equivalently introduced in \cite{GM20} under the name of surplus invariance for the functional $\widetilde{\rho}(X)=\rho(-X)$.

\begin{lemma}\label{solidity}
For a functional $\rho:\cX\to(-\infty,\infty]$, the following statements hold.
\begin{enumerate}
\item $\rho$ satisfies {\rm (M+)} and {\rm (S)} if and only if $ \rho(X)\leq \rho(Y)$ whenever  $\abs{X}\leq \abs{Y}$. 
\item $\rho$ satisfies  {\rm (M+)} and  {\rm (S+)} if and only if $\rho(X)\leq \rho(Y)$ whenever $X^+\leq Y^+$.
\end{enumerate}
\end{lemma}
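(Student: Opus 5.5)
Both parts are elementary, and I would prove each equivalence in both directions directly from Definition \ref{solidmono}.

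For (1), assume first that $\rho$ satisfies (M+) and (S). If $\abs{X}\le\abs{Y}$, then $0\le\abs{X}\le\abs{Y}$, so (M+) gives $\rho(\abs{X})\le\rho(\abs{Y})$. Applying (S) on both sides gives $\rho(X)=\rho(\abs{X})\le\rho(\abs{Y})=\rho(Y)$. Conversely, assume $\rho(X)\le\rho(Y)$ whenever $\abs{X}\le\abs{Y}$. If $0\le X\le Y$, then $\abs{X}=X\le Y=\abs{Y}$, which yields (M+). For (S), note that $X$ and $\abs{X}$ have the same absolute value. Applying the hypothesis in both directions ($\abs{X}\le\bigl|\abs{X}\bigr|$ and $\bigl|\abs{X}\bigr|\le\abs{X}$) gives $\rho(X)\le\rho(\abs{X})$ and $\rho(\abs{X})\le\rho(X)$, hence equality.

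For (2), the argument has the same shape with $X^+$ in place of $\abs{X}$. Assuming (M+) and (S+), let $X^+\le Y^+$. Since $0\le X^+\le Y^+$, (M+) gives $\rho(X^+)\le\rho(Y^+)$, and (S+) on both sides gives $\rho(X)\le\rho(Y)$. Conversely, assume $\rho(X)\le\rho(Y)$ whenever $X^+\le Y^+$. If $0\le X\le Y$, then $X^+=X\le Y=Y^+$, which gives (M+). For (S+), use $(X^+)^+=X^+$: the hypothesis applied with the roles of $X$ and $X^+$ interchanged gives both inequalities between $\rho(X)$ and $\rho(X^+)$, hence equality.

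The only point needing care is that every random variable used lies in $\cX$. This holds because $\cX$ is a lattice ideal: $\abs{X}$ and $X^+$ are dominated in absolute value by $\abs{X}$, so they belong to $\cX$. With that checked, there is no real obstacle. The key observation in each part is that the pairs $(X,\abs{X})$ and $(X,X^+)$ satisfy the defining order relation symmetrically, and this is what produces the equalities (S) and (S+).
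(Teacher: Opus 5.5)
Your proof is correct and follows the paper's argument. The forward directions combine (M+) with (S) or (S+), and the converses get (M+) directly and the (S)/(S+) equality by applying the hypothesis in both directions to the pair $X$ and $\abs{X}$ (resp.\ $X^+$), with your check that $\abs{X},X^+\in\cX$ because $\cX$ is a lattice ideal being a detail the paper leaves implicit.
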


\begin{proof}
(1) The ``only if'' direction is immediate. For the converse, suppose that
 $\rho(X)\leq \rho(Y)$ whenever $\abs{X}\leq \abs{Y}$. Then (M+) follows immediately.  To verify  (S), fix $X\in\cX$ and take $Y=|X|$. Since $\abs{X}\leq \bigabs{\abs{X}}$, we have 
$\rho(X)\leq\rho(\abs{X})$. Similarly, since $\bigabs{\abs{X}}\leq \abs{X}$, we obtain
$\rho(\abs{X})\leq \rho(X)$. Therefore,  $\rho(X)=\rho(\abs{X})$, and $\rho$ satisfies (S). This proves (1). 

The proof of (2) is analogous.
\end{proof}

\begin{proposition}\label{solid+fatou}
Suppose that $\rho:\cX\to(-\infty,\infty]$ has the Fatou property and satisfies {\rm (M+)}  and either {\rm (S)} or {\rm (S+)}. Then $\rho$ is prudent.
\end{proposition}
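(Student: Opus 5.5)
Since prudence is equivalent to the super Fatou property (as observed in \cite{AL24}), it suffices to show that $\rho(X)\le\liminf_n\rho(X_n)$ whenever $X_n\xrightarrow{a.s.}X$ in $\cX$. The difficulty is that the sequence $(X_n)$ need not be dominated, so the Fatou property cannot be applied to it directly. The plan is to replace $(X_n)$ by a dominated sequence $(Z_n)$ that still converges a.s.\ to $X$ and satisfies $\rho(Z_n)\le\rho(X_n)$. The comparison will come from Lemma~\ref{solidity}. I treat the two cases (S) and (S+) in parallel.

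\emph{Case (S).} Set
\[
Z_n:=\operatorname{sign}(X_n)\,\bigl(\abs{X_n}\wedge\abs{X}\bigr).
\]
Then $\abs{Z_n}=\abs{X_n}\wedge\abs{X}\le\abs{X_n}$, so Lemma~\ref{solidity}(1) gives $\rho(Z_n)\le\rho(X_n)$. Also $\abs{Z_n}\le\abs{X}$, so $(Z_n)$ is dominated by $\abs{X}\in\cX$, and $Z_n\in\cX$ because $\cX$ is a lattice ideal.

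It remains to check that $Z_n\to X$ a.s.
\begin{itemize}
\item[$\bullet$] On $\{X\neq0\}$, eventually $\operatorname{sign}(X_n)=\operatorname{sign}(X)$ and $\abs{X_n}\wedge\abs{X}\to\abs{X}$, so $Z_n\to X$.
\item[$\bullet$] On $\{X=0\}$, we have $Z_n=0=X$.
\end{itemize}
A simpler option is to use only $\abs{Z_n}$: by (S), $\rho(Z_n)=\rho(\abs{Z_n})$, and $\abs{Z_n}=\abs{X_n}\wedge\abs{X}\to\abs{X}$ a.s. Applying the Fatou property to $\abs{X_n}\wedge\abs{X}$ then gives
\[
\rho(X)=\rho(\abs{X})\le\liminf_n\rho\bigl(\abs{X_n}\wedge\abs{X}\bigr)\le\liminf_n\rho(\abs{X_n})=\liminf_n\rho(X_n).
\]
The middle inequality uses (M+), since $0\le\abs{X_n}\wedge\abs{X}\le\abs{X_n}$. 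This simpler route avoids the sign function altogether.

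\emph{Case (S+).} By (S+), $\rho(X)=\rho(X^+)$ and $\rho(X_n)=\rho(X_n^+)$. Set $W_n:=X_n^+\wedge X^+$. Then $0\le W_n\le X_n^+$, so (M+) gives $\rho(W_n)\le\rho(X_n^+)$. Moreover $0\le W_n\le X^+$, so $(W_n)$ is dominated by $X^+\in\cX$. Finally $W_n\to X^+$ a.s., because $X_n^+\to X^+$ a.s. and the lattice operations are continuous. The Fatou property then yields
\[
\rho(X)=\rho(X^+)\le\liminf_n\rho(W_n)\le\liminf_n\rho(X_n).
\]

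The main point to verify is the construction of the dominated minorant sequence: truncating by $\abs{X}$ (respectively $X^+$) keeps the a.s.\ limit unchanged while producing domination. Once solidity reduces everything to the positive cone, both facts are straightforward pointwise checks.
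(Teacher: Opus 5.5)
Your proof is correct and follows essentially the paper's argument: reduce to the positive cone via (S) or (S+), apply the Fatou property to a minorant of $\abs{X_n}$ (resp.\ $X_n^+$) that is dominated by $\abs{X}$ (resp.\ $X^+$), and compare back to $\rho(X_n)$ using (M+). The only difference is that the paper uses the increasing sequence $\inf_{m\geq n}\abs{X_m}\uparrow\abs{X}$, whereas you use $\abs{X_n}\wedge\abs{X}$; either works, and the sign-function variant is unnecessary, as you noted yourself.
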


\begin{proof}
 Suppose that $\rho$ satisfies (M+) and  (S). Let $(X_n)\subset \cX$ and $X\in \cX$ be such that $X_n\xrightarrow{a.s.}X$. Then $$0\leq \inf_{m\geq n}\abs{X_m}\uparrow \abs{X}.$$
Thus $$\rho(X)=\rho(\abs{X})\leq \liminf_n\rho\left(\inf_{m\geq n}\abs{X_m}\right)\leq \liminf_n \rho(\abs{X_n})=\liminf_n\rho(X_n),$$
where the first and last equalities follow from (S) and the second inequality follows from (M+).
This proves that $\rho$ has the super Fatou property, and hence is prudent.

The case of  (M+) and (S+) can be proved similarly.
\end{proof}

\begin{remark}
Note that {\rm (M)} implies {\rm (M+)}. Hence, Proposition~\ref{solid+fatou} extends \cite[Theorem~21]{GM20}, which establishes the super Fatou property (hence prudence) under {\rm (M)} and {\rm (S+)}. In addition to replacing {\rm (M)} by the weaker condition {\rm (M+)}, Proposition~\ref{solid+fatou} also covers the case of {\rm (M+)} and {\rm (S)}.
\end{remark}

Finally, we derive several prudence-type consequences of the Fatou property. Although the Fatou property alone is generally insufficient to imply prudence, it still yields super-Fatou-type inequalities for sequences that are bounded below.

\begin{proposition}\label{p2.1}
Let $\cX$ be a lattice ideal of $L^0$ and let $\rho:\cX\to (-\infty,\infty]$ be a proper monotone functional that satisfies the Fatou property at some $X\in \cX$.
If $(X_n)\subseteq \cX$ converges to $X\in \cX$ a.s.\ and $(X_n)$ is bounded below in $\cX$,  then $\rho(X) \leq \liminf_n\rho(X_n)$.
\end{proposition}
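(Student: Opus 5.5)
The plan is to trap $X$ between a monotone sequence built from the $X_n$ and $X$ itself, so that monotonicity and the Fatou property at $X$ can be applied directly. Here ``bounded below in $\cX$'' means there is some $Z\in\cX$ with $X_n\geq Z$ for all $n$.

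The key step is to use the lower envelopes
\[ W_n:=\inf_{m\geq n} X_m \wedge X .\]
These are not obviously in $\cX$, since an infimum of infinitely many elements need not be dominated by an element of $\cX$. We fix this by the two-sided bound
\[ Z\wedge X\leq W_n\leq X .\]
Since $\cX$ is a lattice ideal, $|W_n|\leq |Z|+|X|$. Moreover $W_n$ is measurable as a countable infimum, so $W_n\in\cX$. The whole sequence is dominated by $Y:=|Z|+|X|\in\cX$.

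We then check that $W_n\to X$ a.s. Because $X_m\to X$ a.s., we have $\inf_{m\geq n}X_m\uparrow \liminf_m X_m = X$ a.s. Hence $W_n\uparrow X$ a.s., which is dominated a.s. convergence with dominating element $Y$. The Fatou property of $\rho$ at $X$ then gives $\rho(X)\leq\liminf_n\rho(W_n)$.

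Finally, $W_n\leq X_n$ pointwise (take $m=n$), so monotonicity gives $\rho(W_n)\leq\rho(X_n)$. Chaining the two inequalities yields $\rho(X)\leq\liminf_n\rho(W_n)\leq\liminf_n\rho(X_n)$.

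The only real obstacle is showing that the envelopes lie in $\cX$ and are dominated. Taking the minimum with $X$ handles the upper bound, and the assumed lower bound $Z$ handles the lower bound; without $Z$, $\inf_{m\geq n}X_m$ could fail to lie in $\cX$, which is exactly why the hypothesis is needed. Properness plays no role beyond making the statement meaningful.
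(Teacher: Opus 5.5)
Your proof is correct and follows the paper's argument essentially verbatim. The common route is to take the lower envelopes $\inf_{m\geq n}X_m$, trap them between a lower bound in $\cX$ and $X$, and note that they increase a.s.\ to $X$; this is dominated a.s.\ convergence, so the Fatou property at $X$ combined with monotonicity gives the claim. Your extra ``$\wedge X$'' is harmless but redundant, since $\inf_{m\geq n}X_m\uparrow X$ a.s.\ already forces $\inf_{m\geq n}X_m\leq X$ a.s.; the paper instead just replaces the lower bound $Z$ by $X\wedge Z$.
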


\begin{proof}
Let $Y\in \cX$ be such that  $X_n\geq Y$ for all $n\geq 1$.  Replacing $Y$ by  $X\wedge Y$ if necessary, we may assume that $Y\leq X$.
Since 
\[ Y \leq \inf_{m\geq n}X_m \uparrow X \text{ as $n\to \infty$},\]
the sequence $\big(\inf_{m\geq n}X_m\big )_n$ converges to $X$ under dominated almost sure convergence.  By the assumption, $\rho$ has the Fatou property  at $X$ and $\rho$ is monotone.  Thus
\[ \rho(X) \leq \liminf_n\rho\Big(\inf_{m\geq n}X_m\Big) \leq\liminf_n \rho(X_n).
\]
\end{proof}

The lower-bound assumption in Proposition~\ref{p2.1} can be weakened from boundedness below in $\cX$ to boundedness below in $L^1$ when the functional is additionally quasiconvex and law invariant. Recall that every r.i.\ space $\cX$ satisfies $L^\infty\subset \cX\subset L^1$.

\begin{corollary}
Let $\cX$ be an r.i.\ space and let $\rho:\cX\to (-\infty,\infty]$ be a proper, monotone, quasiconvex  and law-invariant  functional that satisfies the Fatou property.
If $(X_n)\subseteq \cX$ converges to $X\in \cX$ a.s.\ and $(X_n)$ is bounded below in $L^1$,  then $\rho(X) \leq \liminf_n\rho(X_n)$.       
\end{corollary}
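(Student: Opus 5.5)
The plan is to reduce to Proposition~\ref{p2.1} by replacing the lower bound in $L^1$ with a lower bound in $\cX$, at no cost in the values of $\rho$. The tool is conditional expectation, which law-invariant quasiconvex functionals with the Fatou property do not increase. As in the proof of Theorem~\ref{wekp=p}, a proper, quasiconvex, law-invariant functional with the Fatou property on an r.i.\ space is $\sigma(\cX,L^\infty)$-lower semicontinuous (the quasiconvex version of \cite[Theorem 3.1]{ML20}). Hence it satisfies $\rho(\E[Z\mid\mathcal G])\le\rho(Z)$ for every sub-$\sigma$-algebra $\mathcal G$ and every $Z\in\cX$; this is the dilatation monotonicity of \cite[Proposition 5.6]{BKMS21}, whose proof uses only quasiconvexity, law invariance and this lower semicontinuity.

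Let $W\in L^1$ with $X_n\ge W$ for all $n$. We may assume $\liminf_n\rho(X_n)<\infty$ and, passing to a subsequence, that $\rho(X_n)\to\liminf_n\rho(X_n)$. By Egoroff's theorem choose sets $\Omega_k\uparrow\Omega$ and a further subsequence with $|X_n-X|\le 1/n$ on $\Omega_n$, exactly as in \eqref{temf4}; we may also arrange $\bP(\Omega_n^c)>0$. Let $\mathcal G_n$ be the $\sigma$-algebra generated by $\mathcal F\cap\Omega_n$ together with the single atom $\Omega_n^c$. Set
\[
Y_n:=\E[X_n\mid\mathcal G_n]=X_n\mathbf 1_{\Omega_n}+c_n\mathbf 1_{\Omega_n^c},\qquad c_n=\E[X_n\mathbf 1_{\Omega_n^c}]/\bP(\Omega_n^c).
\]
Then $\rho(Y_n)\le\rho(X_n)$, and $Y_n\to X$ a.s.\ because $\Omega_n\uparrow\Omega$. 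Moreover $c_n\ge \E[W\mathbf 1_{\Omega_n^c}]/\bP(\Omega_n^c)$, but this bound may tend to $-\infty$, so it does not yet give a uniform lower bound.

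The main obstacle is to obtain a lower bound in $\cX$ uniform in $n$. To get one, I would instead condition only on the negative part. Write $W_N:=W\vee(-N)$ and consider $A_n=\{X_n< -N\}$, which lies inside $\{W<-N\}$. On $A_n$, replace $X_n$ by its conditional average over $A_n$, and leave $X_n$ unchanged off $A_n$. The resulting averages are still $\ge -\E[|W|\mathbf 1_{\{W<-N\}}]/\bP(A_n)$, so this does not bound them either. A cleaner route is to use monotonicity directly: $\rho(X_n)\ge\rho(X_n\vee V)$ is false in general, so truncation upward is not available, and one must genuinely average.

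The plan I would commit to is therefore as follows. Choose a fixed set $B$ with $\bP(B)>0$ and $W\ge -M$ off $B$. Let $\mathcal G$ be generated by $\mathcal F\cap B^c$ and the atom $B$. Then $\E[X_n\mid\mathcal G]\ge -M\mathbf 1_{B^c}-\tfrac{\|W\|_1}{\bP(B)}\mathbf 1_B$, which is a bound in $L^\infty\subset\cX$. The cost is that the limit is $\E[X\mid\mathcal G]$ rather than $X$, and this convergence holds only if $\E[X_n\mathbf 1_B]\to\E[X\mathbf 1_B]$; that would need uniform integrability. To handle this, I would use the Egoroff sets $\Omega_n$ as $B^c$ with $B_n=\Omega_n^c$ shrinking. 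The bound $-\|W\|_1/\bP(B_n)$ then blows up, which is the crux. I expect to resolve it by applying Proposition~\ref{p2.1} for each fixed $k$ to the sequence $\E[X_n\mid\mathcal G_k]$, $n\ge k$, after first passing to a subsequence so that $\E[X_n\mathbf 1_{B_k}]$ converges in $[-\infty,\infty)$ (it is bounded below by $\E[W\mathbf 1_{B_k}]$). The limit $Z_k$ satisfies $\rho(Z_k)\le\liminf_n\rho(X_n)$ by monotonicity, since $Z_k\ge$ the a.s.\ limit when the averages exceed the true one. Then let $k\to\infty$: $Z_k\to X$ a.s.\ and $Z_k$ is bounded below by $X\wedge(-\|W\|_1/\bP(B_k))\cdots$, and I would conclude via the Fatou property along $k$, using $L^1$-boundedness from below once more. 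This last limit is the step I expect to be most delicate.
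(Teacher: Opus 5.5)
\textbf{Your proposal has a genuine gap: it is not a proof.} The decisive step, passing from $\rho(Z_k)\le\liminf_n\rho(X_n)$ to $\rho(X)\le\liminf_k\rho(Z_k)$, is left open, and the tool you suggest for it cannot work. Along $k$ you again face an a.s.\ convergent sequence $Z_k\to X$. Its only lower bound, $-\|W\|_1/\bP(B_k)$ on $B_k$, degenerates as $\bP(B_k)\to0$ and gives no minorant in $\cX$ (nor, in general, in $L^1$). So neither the Fatou property nor Proposition~\ref{p2.1} applies. Appealing to ``boundedness below in $L^1$ once more'' would be using the very statement you are proving.

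The missing idea is to compare $Z_k$ with an approximation of $X$ that converges in a stronger sense. Fatou's lemma, with the integrable minorant $W$, gives $\liminf_n\E[X_n\mathbf 1_{B_k}]\ge\E[X\mathbf 1_{B_k}]$. Hence $Z_k\ge\E[X\mid\mathcal G_k]$, and monotonicity gives $\rho(\E[X\mid\mathcal G_k])\le\rho(Z_k)$. Moreover $\|\E[X\mid\mathcal G_k]-X\|_{L^1}\le 2\E[|X|\mathbf 1_{B_k}]\to0$. So it is the $\sigma(\cX,L^\infty)$-lower semicontinuity, not the Fatou property, that yields $\rho(X)\le\liminf_k\rho(\E[X\mid\mathcal G_k])$ and closes the argument.

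Several intermediate points also need repair:
\begin{itemize}
\item To apply Proposition~\ref{p2.1} at fixed $k$, you need $W$ bounded below off $B_k$. The Egoroff sets do not give this. Take $B_k\supseteq\{W<-k\}$ instead; Egoroff is unnecessary.
\item The inequality $\rho(Z_k)\le\liminf_n\rho(X_n)$ follows from Proposition~\ref{p2.1} combined with dilatation monotonicity, not from ``monotonicity since $Z_k\ge$ the a.s.\ limit''.
\item The averages $\E[X_n\mathbf 1_{B_k}]/\bP(B_k)$ converge in $(-\infty,\infty]$, not $[-\infty,\infty)$. The infinite case must be handled, e.g.\ by bounding $\E[X_n\mid\mathcal G_k]$ below by $X_n\mathbf 1_{B_k^c}+(\E[X\mathbf 1_{B_k}]/\bP(B_k)-\varepsilon)\mathbf 1_{B_k}$ for large $n$, then using Proposition~\ref{p2.1} and letting $\varepsilon\downarrow0$.
\end{itemize}

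The paper avoids all of this. By \cite[Theorem 3.1]{ML20}, $\rho$ is $\sigma(\cX,L^\infty)$-lower semicontinuous. So by \cite[Proposition 5.3]{BKMS21} it extends to a proper, monotone, quasiconvex, law-invariant, $\sigma(L^1,L^\infty)$-lower semicontinuous functional $\widetilde\rho$ on $L^1$, which has the Fatou property on $L^1$. In $L^1$ your minorant $W$ is an admissible lower bound. Proposition~\ref{p2.1} applied to $\widetilde\rho$ then gives $\rho(X)=\widetilde\rho(X)\le\liminf_n\widetilde\rho(X_n)=\liminf_n\rho(X_n)$ at once. Your route, once repaired, stays inside $\cX$ and uses dilatation monotonicity plus Fatou's lemma in place of the extension theorem, at the cost of a two-stage limit.
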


\begin{proof}
 Recall again from \cite[Theorem 3.1]{ML20} that a proper quasiconvex law-invariant functional with the Fatou property is $\sigma(\cX,L^\infty)$-lower semicontinuous.   Thus, by \cite[Proposition 5.3]{BKMS21}, $\rho$ extends to   a proper, increasing, quasiconvex  and law-invariant  functional $\widetilde{\rho}$ on $L^1$ that is $\sigma(L^1,L^\infty)$-lower semicontinuous, which is equivalent to the Fatou property on $L^1$.
The conclusion now follows from Proposition~\ref{p2.1} applied to $\widetilde{\rho}$ on $L^1$.
\end{proof}

\section{Prudence of Cash-Additive Hulls}
In this section, $\cX$ denotes a vector space  over $\mathbb{R}$, and $\Pi \in \cX\setminus\{0\}$ is fixed. The vector $\Pi$ plays the role of a num\'eraire.

Star-shaped risk measures have attracted considerable attention in recent years; see, for example, Castagnoli et al.~\cite{CCMTW22}, Laeven et al.~\cite{LGZ23, LGZ24},  Liebrich~\cite{L24},  Righi~\cite{R24}.  A functional $\rho:\cX \rightarrow [-\infty,\infty]$  is said to be \emph{star-shaped} if $\rho(0) \in \mathbb{R}$ and 
\begin{align}\label{star}
  \rho(\lambda X) \leq \lambda \rho(X)+(1-\lambda)\rho(0),\quad \text{for any } X\in\cX \text{ and }\lambda \in [0,1].  
\end{align} 
Every convex functional and every  positive homogeneous functional is star-shaped.

Note that if $\rho$ is star-shaped, then the function $t \mapsto \frac{\rho(t \Pi)-\rho(0)}{t}$ is increasing on $(0,\infty)$.  Indeed, let $t_1 >t_2 > 0$. By \eqref{star}, $$\rho(t_2 \Pi)=\rho\Big(\frac{t_2}{t_1}\cdot t_1\Pi\Big) \leq \frac{t_2}{t_1} \rho(t_1\Pi)+\Big(1-\frac{t_2}{t_1}\Big) \rho(0).$$
Rearranging this inequality yields 
$$\frac{\rho(t_2 \Pi)-\rho(0)}{t_2}\leq \frac{\rho(t_1 \Pi)-\rho(0)}{t_1}.$$
By a similar argument, it is also increasing on $(-\infty,0)$. For a star-shaped functional $\rho$, we define 
$$u_{\rho}:=\lim_{t\rightarrow\infty}\frac{\rho(t \Pi)-\rho(0)}{t} \text { and  } l_{\rho}:=\lim_{t\rightarrow-\infty}\frac{\rho(t \Pi)-\rho(0)}{t}.$$

A functional $\rho:\cX\rightarrow [-\infty,\infty]$ is said to be \emph{$\Pi$-additive} if $\rho(X+m\Pi)=\rho(X)+m$ for all $m \in \mathbb{R}$. When $\cX$ is a function space and $\Pi$ is the constant one function, $\Pi$-additivity reduces to the usual cash additivity. For a general functional $\rho:\cX\rightarrow [-\infty,\infty]$, its  \emph{$\Pi$-additive  hull}  was studied in \cite{FK07} and is defined as follows 
$$\rho_{\Pi}(X)=\inf\big\{m +\rho(X-m\Pi): m \in \mathbb{R}\big\},\qquad X\in\cX.$$
It is known that $\rho_{\Pi}$ is the largest $\Pi$-additive functional that is majorized by $\rho$.
The Expected Shortfall, Haezendonck--Goovaerts risk measures, and higher moment coherent risk measures (\cite{K07,DPR10}) are prominent examples of risk measures that arise as cash-additive hulls of suitable functionals. Another example is the generalized optimized certainty equivalent, introduced by Wang, Mao, and Hu \cite{WMH24}.

We now turn to the study of stability properties preserved by $\Pi$-additive hulls. Recall first that,  for a $\Pi$-additive risk functional, $\rho(X)$ is interpreted as the amount of the asset $\Pi$ that can be withdrawn from the position $X$. Consequently, when a $\Pi$-additive hull is used as a risk measure, it is desirable to ensure that it does not attain the value $-\infty$. In what follows, we investigate conditions under which $\Pi$-additive hulls avoid  the value $-\infty$ and preserve  stability properties.

The stability properties considered below include Fatou-type,
super Fatou-type, and topological continuity properties.
To treat these situations simultaneously, we introduce the notion
of a convergence structure. A convergence structure $\eta$ on a vector space $\cX$ is a relation between nets and vectors satisfying the following properties:
\begin{enumerate}
\item[(i)] If $X_{\alpha} \xrightarrow{\eta} X$ and $X_{\alpha} \xrightarrow{\eta} Y$ then $X=Y$.
\item[(ii)] If $X_{\alpha}=X$ for all $\alpha$, then $X_{\alpha} \xrightarrow{{\eta}} X$,
\item[(iii)] If $X_{\alpha} \xrightarrow{{\eta}} X$, then every subnet of $(X_{\alpha})$ converges to $X$ with respect to $\eta$,
\item[(iv)] Suppose that $X_{\alpha} \xrightarrow{\eta} X$ and $Y_{\alpha} \xrightarrow{\eta} Y$. Let $(Z_{\alpha})$ be a net in $\cX$ such that $Z_{\alpha} \in \{X_{\alpha},Y_{\alpha}\}$ for every $\alpha$. Then $Z_{\alpha} \xrightarrow{{\eta}} X$.
\item[(v)]  The vector space operations are continuous with respect to $\eta$.
\end{enumerate}
The pair $(\cX,\eta)$ is called  a \emph{convergence vector space}.  
We refer the reader to \cite{BTW23} for a recent discussion of convergence structures.

Let $(\cX,\eta)$ be a  convergence vector space. We say that a functional $\rho:\mathcal X\to(-\infty,\infty]$ is  \emph{lower semicontinuous} at $X \in \mathcal{X}$ if  $\rho(X) \leq \liminf \rho(X_{\alpha})$ whenever $X_{\alpha} \xrightarrow{{\eta}} X$ in $\cX$.   We say that $\rho$ is \emph{$\sigma$-lower semicontinuous} at $X \in \mathcal{X}$ if the same condition holds for sequences instead of nets.
The notions of continuity and $\sigma$-continuity are defined analogously.

The following lemma is a key ingredient in the proof of the main result of this section.

\begin{lemma}\label{atinfty}
Let $(\cX,\eta)$ be a convergence vector space, $\Pi\in\cX$, and let $\rho:\cX \rightarrow (-\infty,\infty]$ be  a star-shaped  functional with $l_{\rho}<1<u_{\rho}$. Assume that $\rho$ is lower semicontinuous at $t\Pi$ for every $t\in\R$ with $|t|>t_0$ for some $t_0>0$. If $X_{\alpha} \xrightarrow{{\eta}} X \in \cX$ and  either $m_{\alpha} \rightarrow \infty$ or $m_{\alpha} \rightarrow -\infty$, then 
$$\lim_{\alpha}\Big(m_{\alpha}+\rho(X_{\alpha}-m_{\alpha} \Pi)\Big)=\infty.$$
If $\rho$ is $\sigma$-lower semicontinuous at $t\Pi$ for every $t$ with $|t|>t_0$ for some $t_0>0$, then the same conclusion holds  when nets are replaced by sequences.
\end{lemma}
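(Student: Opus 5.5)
The idea is to use star-shapedness to pull the position $X_\alpha-m_\alpha\Pi$ back to a fixed compact scale, where the lower semicontinuity at $\pm s\Pi$ applies. The asymptotic slopes $l_\rho<1<u_\rho$ then force linear growth in $|m_\alpha|$. Throughout write $Y_\alpha:=X_\alpha-m_\alpha\Pi$. Recall that $\rho(0)\in\R$, and that $t\mapsto(\rho(t\Pi)-\rho(0))/t$ is increasing on $(0,\infty)$ and on $(-\infty,0)$.

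\emph{Case $m_\alpha\to\infty$.} As $t\to-\infty$ the slope $(\rho(t\Pi)-\rho(0))/t$ decreases to $l_\rho<1$. So I fix $s>t_0$ with
$$c_s:=\frac{\rho(-s\Pi)-\rho(0)}{-s}<1.$$
If $\rho(-s\Pi)=\infty$, then $c_s=-\infty$, which is harmless. For $\alpha$ with $m_\alpha>s$, put $\lambda_\alpha:=s/m_\alpha\in(0,1)$. Star-shapedness gives $\rho(\lambda_\alpha Y_\alpha)\le\lambda_\alpha\rho(Y_\alpha)+(1-\lambda_\alpha)\rho(0)$, which rearranges to
$$m_\alpha+\rho(Y_\alpha)\;\ge\; m_\alpha\Big(1+\frac{\rho(\lambda_\alpha Y_\alpha)-\rho(0)}{s}\Big)+\rho(0).$$
Now $\lambda_\alpha Y_\alpha=\frac{s}{m_\alpha}X_\alpha-s\Pi$. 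By continuity of the vector operations (axiom (v)), together with $s/m_\alpha\to0$ and $X_\alpha\xrightarrow{\eta}X$, this net converges to $-s\Pi$ with respect to $\eta$. Lower semicontinuity at $-s\Pi$ then gives $\liminf_\alpha\rho(\lambda_\alpha Y_\alpha)\ge\rho(-s\Pi)$.

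Consider first $\rho(-s\Pi)<\infty$. Since $1+(\rho(-s\Pi)-\rho(0))/s=1-c_s>0$, I pick $\varepsilon>0$ so that $\delta:=1-c_s-\varepsilon/s>0$. Eventually $\rho(\lambda_\alpha Y_\alpha)>\rho(-s\Pi)-\varepsilon$, and hence $m_\alpha+\rho(Y_\alpha)\ge\delta m_\alpha+\rho(0)\to\infty$. If instead $\rho(-s\Pi)=\infty$, the bracket is eventually at least $1$, so the right-hand side is eventually $\ge m_\alpha+\rho(0)\to\infty$.

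\emph{Case $m_\alpha\to-\infty$.} This is symmetric. Since the slope on $(0,\infty)$ increases to $u_\rho>1$, I choose $s>t_0$ with $(\rho(s\Pi)-\rho(0))/s>1$. Then I set $\lambda_\alpha:=s/|m_\alpha|$, so that $\lambda_\alpha Y_\alpha=\lambda_\alpha X_\alpha+s\Pi\xrightarrow{\eta}s\Pi$. Star-shapedness yields
$$m_\alpha+\rho(Y_\alpha)\ge|m_\alpha|\Big(\frac{\rho(\lambda_\alpha Y_\alpha)-\rho(0)}{s}-1\Big)+\rho(0).$$
Lower semicontinuity at $s\Pi$ keeps the bracket eventually bounded below by a positive constant, or makes it unbounded if $\rho(s\Pi)=\infty$. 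Hence the expression tends to $\infty$.

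The sequential statement follows by the identical argument, with $\sigma$-lower semicontinuity replacing lower semicontinuity. The only point needing care is the convergence $\lambda_\alpha X_\alpha\to0$, which pairs a scalar net tending to $0$ with a convergent vector net. I expect this to be the main technical step. It is exactly what joint continuity of scalar multiplication in (v) provides, using also (ii) for the constant net $-s\Pi$ (respectively $s\Pi$) and continuity of addition.
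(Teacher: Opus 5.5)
Your proof is correct and uses the same mechanism as the paper. Both rescale $X_\alpha-m_\alpha\Pi$ via star-shapedness by a factor of order $1/|m_\alpha|$ so that it $\eta$-converges to a fixed multiple $\pm s\Pi$, then apply lower semicontinuity there and compare with the asymptotic slopes. The only difference is that you argue directly with one well-chosen $s$, which gives the explicit bound $\delta|m_\alpha|+\rho(0)$; the paper argues by contradiction from an upper bound $c$, rescales by $t/|c-m_\alpha|$, and derives $\rho(\pm t\Pi)\le \pm t+\rho(0)$ for all $t>t_0$, contradicting $u_\rho>1$ or $l_\rho<1$.
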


\begin{proof}
Suppose that either $m_{\alpha} \rightarrow \infty$ or $m_{\alpha} \rightarrow -\infty$, and that $X_{\alpha} \xrightarrow{\eta} X$. Assume, for contradiction, that $\lim_{\alpha}\bigl(m_{\alpha}+\rho(X_{\alpha}-m_{\alpha} \Pi)\bigr)\neq \infty$.  Then $$x:=\liminf_{\alpha}\Big(m_{\alpha}+\rho(X_{\alpha}-m_{\alpha} \Pi)\Big) \in \mathbb{R}\cup\{-\infty\}.$$
By passing to a subnet, we may assume that $m_{\alpha}+\rho(X_{\alpha}-m_{\alpha} \Pi) \rightarrow x$. Hence, there exist $c \in \R$ and $\alpha_0 $ such that $$m_\alpha+\rho(X_{\alpha}-m_{\alpha} \Pi)\leq c,\quad\forall \alpha \geq \alpha_0.$$

Fix $t>t_0$.  Set $\delta=-1$ if $m_{\alpha} \rightarrow \infty$, and  $\delta=1$ if $m_{\alpha} \rightarrow -\infty$. By passing to a further subnet, we may assume that  $\delta\cdot (c- m_{\alpha})>t$ for all $\alpha \geq \alpha_0$. Then 
\begin{align*}
\frac{\rho(X_{\alpha}-m_{\alpha}\Pi)}{\delta\cdot (c-m_{\alpha})}\leq \delta,\qquad \forall \alpha\geq \alpha_0.
\end{align*}
Since $\rho$ is star-shaped,  it follows that
\begin{align}\label{basict2}
& \rho\left(t\cdot \frac{X_{\alpha}-m_{\alpha}\Pi}{\delta \cdot (c-m_{\alpha})}\right)\\
\nonumber
\leq&  \frac{t}{\delta\cdot (c-m_{\alpha})}\rho(X_{\alpha}-m_{\alpha}\Pi)
+ \left(1-\frac{t}{\delta\cdot (c-m_{\alpha})}\right)\rho(0) \\
\nonumber\leq &t\delta + \left(1 - \frac{t}{\delta \cdot (c-m_{\alpha})}\right)\rho(0).
\end{align}
Observe that, since the vector operations are continuous,  $$\frac{t}{\delta} \cdot \frac{X_{\alpha}-m_{\alpha}\Pi}{c-m_{\alpha}}=\frac{t}{\delta}\cdot \Big(\frac{X_{\alpha}-c\Pi}{c-m_{\alpha}}+\Pi\Big)\xrightarrow{\eta} \frac{t}{\delta}\Pi.$$
Therefore, by the lower semicontinuity of $\rho$ and \eqref{basict2}, we have 
$$\rho\Big(\frac{t}{\delta}\Pi\Big)\leq \liminf_{\alpha} \rho\left(t \cdot \frac{X_{\alpha}-m_{\alpha} \Pi}{c-m_{\alpha}}\right)\leq t\delta+\rho(0).$$
If $\delta=1$, it follows that $u_{\rho} \leq 1$, contradicting the assumption that $u_\rho>1$. If $\delta=-1$, it follows that   $l_{\rho} \geq 1$, contradicting the assumption that $l_\rho<1$. The proof in the 
$\sigma$-lower semicontinuous case  is identical, with nets replaced by sequences. 
\end{proof}

We are now ready to prove the main result of this section.

\begin{theorem}\label{valuesofch}
Let $(\cX,\eta)$ be a convergence vector space, let $\Pi\in\cX$, and let $\rho:\cX \rightarrow (-\infty,\infty]$ be a proper, star-shaped,  $\sigma$-lower semicontinuous functional satisfying   $l_{\rho}<1<u_{\rho} $. Then  the following hold:
\begin{enumerate}
\item  $\rho_{\Pi}$ takes values in $(-\infty,\infty]$, and the infimum in  the definition of $\rho_{\Pi}(X)$ is attained for every $X\in\cX$.
\item  $\rho_{\Pi}$ is  $\sigma$-lower semicontinuous.
\end{enumerate}
Moreover, if $\rho$ is lower semicontinuous (respectively, $\sigma$-continuous, or continuous), then $\rho_{\Pi}$ is also   lower semicontinuous (respectively, $\sigma$-continuous, or continuous).
\end{theorem}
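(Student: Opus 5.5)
The plan is to apply Lemma~\ref{atinfty} throughout, noting first that $\sigma$-lower semicontinuity of $\rho$ on all of $\cX$ gives its hypothesis for sequences, and that lower semicontinuity gives its hypothesis for nets. For each fixed $X$, set $g_X(m):=m+\rho(X-m\Pi)$, $m\in\R$.

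For (1), apply the sequence version of Lemma~\ref{atinfty} with the constant sequence $X_n=X$, which converges by axiom (ii) of a convergence structure. This yields $g_X(m)\to\infty$ as $m\to\pm\infty$. Next, $g_X$ is lower semicontinuous on $\R$: if $m_n\to m$, then $X-m_n\Pi\xrightarrow{\eta}X-m\Pi$ by continuity of the vector operations, so $\sigma$-lower semicontinuity of $\rho$ gives $g_X(m)\le\liminf_n g_X(m_n)$. Also $g_X$ never takes the value $-\infty$. Now take a minimizing sequence $(m_n)$ for $\inf g_X$. It must be bounded, since otherwise some subsequence tends to $\pm\infty$ and $g_X(m_{n_k})\to\infty$, which is impossible when $\inf g_X<\infty$; if $\inf g_X=\infty$, any $m$ attains it. Extract a convergent subsequence $m_{n_k}\to m^*$. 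Lower semicontinuity then gives $g_X(m^*)\le\inf g_X$, so the infimum is attained and $\rho_\Pi(X)=g_X(m^*)>-\infty$.

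For (2), let $X_n\xrightarrow{\eta}X$. Pass to a subsequence with $\rho_\Pi(X_n)\to\liminf_n\rho_\Pi(X_n)=:L$, and assume $L<\infty$. By (1), choose $m_n$ with $\rho_\Pi(X_n)=g_{X_n}(m_n)$. If $(m_n)$ were unbounded, a further subsequence would have $m_n\to\pm\infty$, and Lemma~\ref{atinfty}, together with axiom (iii) so that subsequences still converge, would force $g_{X_n}(m_n)\to\infty$. This contradicts $L<\infty$. Hence, after passing to a subsequence, $m_n\to m$. Then $X_n-m_n\Pi\xrightarrow{\eta}X-m\Pi$, and
\[
\rho_\Pi(X)\le m+\rho(X-m\Pi)\le\liminf_n\bigl(m_n+\rho(X_n-m_n\Pi)\bigr)=L.
\]
The net version is identical, using the net form of Lemma~\ref{atinfty}. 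There is one delicate point: an unbounded net $(m_\alpha)$ need not have a subnet tending to $\pm\infty$ unless we first restrict to a subnet on which $\abs{m_\alpha}\to\infty$. This is possible because the net is eventually not bounded, so we can pass to a subnet where $\abs{m_\alpha}\to\infty$ and then to one of fixed sign. Similarly, a bounded net in $\R$ has a convergent subnet by compactness.

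For the continuity statements, it remains to prove upper semicontinuity. This is easy and does not need the lemma. Since $\rho_\Pi(X)=g_X(m^*)$ for some $m^*$, and $X_n-m^*\Pi\xrightarrow{\eta}X-m^*\Pi$, we get
\[
\limsup_n\rho_\Pi(X_n)\le m^*+\limsup_n\rho(X_n-m^*\Pi)=m^*+\rho(X-m^*\Pi)=\rho_\Pi(X).
\]
Combining this with the lower semicontinuity already proved gives continuity, and the argument works equally for sequences and for nets.

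The main obstacle is the compactness step in (2): ruling out escaping minimizers $m_n$. This is exactly what Lemma~\ref{atinfty} is designed to handle, but in the net case the subnet bookkeeping needs the care described above.
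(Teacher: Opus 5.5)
Your proposal is correct and takes essentially the same route as the paper. In (1), Lemma~\ref{atinfty} applied to a constant sequence makes $m\mapsto m+\rho(X-m\Pi)$ blow up at $\pm\infty$, which gives attainment of the infimum. In (2), the same lemma keeps the optimal $m_n$ bounded, so $\sigma$-lower semicontinuity applies along a convergent subsequence, and in the continuity case you freeze the optimal $m^*$ for $X$ to get upper semicontinuity. You also make explicit two points the paper leaves implicit: passing first to a subsequence that realizes $\liminf_n\rho_\Pi(X_n)$, and the subnet bookkeeping in the net case.
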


\begin{proof}
(1) Fix $X\in\cX$. If $\rho_\Pi(X)=\infty$, then the conclusion is immediate. Assume therefore that $\rho_\Pi(X)<\infty$. Take a sequence $(m_n)$ of real numbers such that $$m_n+\rho(X-m_n\Pi)\rightarrow \rho_\Pi(X).$$

Since the  constant sequence $(X)_{n=1}^\infty $ is $\eta$-convergent,  Lemma \ref{atinfty} implies that $(m_n)$ cannot diverge to $\pm \infty$. By passing to a subsequence, we may therefore assume that $m_n\rightarrow m_0$ for some $m_0\in\R$.
Since $X-m_n\Pi\xrightarrow{\eta }X-m_0\Pi$ in $\cX$,  $\sigma$-lower semicontinuity of $\rho$ implies that
$$\rho_\Pi(X)\leq m_0+\rho(X-m_0\Pi)\leq\liminf_n \Big( m_n+\rho(X-m_n\Pi) \Big)=\rho_\Pi(X).$$
Hence, $$\rho_\Pi(X)=m_0+\rho(X-m_0\Pi)\in(-\infty,\infty]$$ 
and the infimum in the definition of $\rho_\Pi(X)$ is attained at $m_0$.

(2) Let $(X_{n})$ be a sequence in $\cX$ such that $X_{n}\xrightarrow{\eta} X \in \cX$. We will show that $\rho_{\Pi}(X) \leq \liminf_{n}\rho_{\Pi}(X_n)$.    By part (1), for any $n\in\N$, there exists $m_{n} \in \mathbb{R}$ such that $$\rho_{\Pi}(X_{n})=m_{n}+\rho(X_{n}-m_{n}\Pi).$$ 

If $(m_{n})$ is unbounded, then, by passing to a subsequence, we may assume that $m_{n} \rightarrow \infty$ or $m_{n} \rightarrow -\infty$.  Lemma \ref{atinfty} then yields $\liminf_n \rho_{\Pi}(X_{n})=\infty$ and the desired inequality follows immediately.  Suppose next that $ (m_{n})$ is bounded. By passing to a subsequence, we may assume that $m_{n}\rightarrow v_0 $ for some $v_0\in \mathbb{R}$. Then $X_{n}-m_{n}\Pi\xrightarrow{{\eta}} X-v_0\Pi$. The $\sigma$-lower semicontinuity of $\rho$ implies 
\begin{align}\label{lowsemi}
    \rho_{\Pi}(X)\leq v_0+\rho(X-v_0\Pi)\leq \liminf_n\Big( m_{n}+\rho(X_{n}-m_{n}\Pi)\Big)=\liminf_n\rho_{\Pi}(X_{n}).
\end{align}
This proves (2).

Finally, we prove the last assertion. The case of lower semicontinuity can be proved in the same way as above. Assume now that $\rho$ is $\sigma$-continuous with respect to $\eta$, and let $X_n\xrightarrow{{\eta}} X$. By part~(1), there exists  $m_0\in\R$ such that $\rho_\Pi(X)=m_0+\rho(X-m_0\Pi)$. Since $X_n-m_0\Pi\xrightarrow{{\eta}} X-m_0\Pi$, the $\sigma$-continuity of $\rho$ yields $$\rho(X_n-m_0\Pi)\rightarrow \rho(X-m_0\Pi).$$
Consequently,
$$\limsup_n\rho_\Pi(X_n)\leq \limsup_n \Big(m_0+\rho(X_n-m_0\Pi)\Big)=m_0+\rho(X-m_0\Pi)=\rho_\Pi(X).$$
Combining this with \eqref{lowsemi}, we obtain $\rho_\Pi(X)=\lim_n\rho_\Pi(X_n)$. 
This proves that $\rho_\Pi$ is $\sigma$-continuous. 
The case of continuity is analogous.
\end{proof}

The following result shows that the condition
$l_\rho<1<u_\rho$ cannot be omitted. Indeed, if this condition fails,
the $\Pi$-additive hull $\rho_\Pi$ may attain the value $-\infty$.

\begin{lemma}\label{badlemma}
Let $\rho:\cX\rightarrow (-\infty,\infty]$ be a star-shaped functional such that $l_\rho>1$ or $u_\rho<1$. Then $\rho_{\Pi}(\Pi)=-\infty$.
\end{lemma}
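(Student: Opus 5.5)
We compute $\rho_\Pi(\Pi)$ directly from its definition. Reparametrizing by $t=1-m$ gives
$$\rho_\Pi(\Pi)=\inf_{m\in\R}\big(m+\rho((1-m)\Pi)\big)=\inf_{t\in\R}\big(1-t+\rho(t\Pi)\big).$$
So it suffices to find a sequence $t_k$ along which $\rho(t_k\Pi)-t_k\to-\infty$. We split into the two cases $u_\rho<1$ and $l_\rho>1$.

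\emph{Case $u_\rho<1$.} The map $t\mapsto (\rho(t\Pi)-\rho(0))/t$ is increasing on $(0,\infty)$ with limit $u_\rho$ as $t\to\infty$. Hence for all $t>0$,
$$\rho(t\Pi)\leq \rho(0)+t\,u_\rho.$$
This also holds when $u_\rho=-\infty$, since then $\rho(t\Pi)=-\infty$ would be impossible given the range $(-\infty,\infty]$. Pick a real $u'$ with $u_\rho\le u'<1$, or any $u'<1$ if $u_\rho=-\infty$, so that $\rho(t\Pi)\le\rho(0)+tu'$ for $t>0$. Then
$$1-t+\rho(t\Pi)\leq 1+\rho(0)-t(1-u')\to-\infty \quad\text{as } t\to\infty,$$
using that $\rho(0)\in\R$ by star-shapedness.

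\emph{Case $l_\rho>1$.} The quotient is increasing on $(-\infty,0)$, so its value at any $t<0$ is at least its limit $l_\rho$ as $t\to-\infty$. Multiplying by $t<0$ reverses the inequality, giving
$$\rho(t\Pi)-\rho(0)\le t\,l_\rho \quad\text{for } t<0.$$
Choose a real $l'$ with $1<l'\le l_\rho$, which handles $l_\rho=+\infty$. Then for $s=-t\to\infty$,
$$1-t+\rho(t\Pi)\le 1+\rho(0)+t(l'-1)=1+\rho(0)-s(l'-1)\to-\infty.$$

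In both cases the infimum is $-\infty$, which proves the claim. The only delicate point is the direction of the monotonicity inequalities and the handling of infinite $u_\rho$ or $l_\rho$; the argument above deals with this by passing to a finite intermediate value $u'$ or $l'$.
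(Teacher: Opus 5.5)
Your proof is correct and follows the paper's argument, run directly instead of by contradiction. The paper assumes $\rho_\Pi(\Pi)\ge c$, divides $\rho((1-m)\Pi)\ge c-m$ by $1-m<0$, and lets $m\to\infty$ to get $l_\rho\le 1$, using only the limit defining $l_\rho$ rather than the monotonicity of the difference quotient. Your aside about $u_\rho=-\infty$ is moot, since that case cannot occur when $\rho$ takes values in $(-\infty,\infty]$.
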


\begin{proof}
Assume first that $l_\rho>1$. Suppose, for contradiction, that $\rho_{\Pi}(\Pi)>-\infty$. Then there exists $c\in\R$ such that  $$m+\rho(\Pi-m\Pi) \geq c,\qquad\forall m \in \mathbb{R}.$$
Hence, $$\frac{\rho((1-m)\Pi)}{1-m} \leq \frac{c-m}{1-m},\qquad\forall m>1.$$
Letting $m \rightarrow \infty$, we obtain $l_{\rho} \leq 1$, contradicting the assumption that $l_\rho>1$.

The case  $u_{\rho}<1$ is analogous.
\end{proof}

Theorem~\ref{valuesofch} has a wide range of applications. We illustrate some of them below.
\begin{example}\label{conexlowersemi}Let $(\cX,\eta)$ be a convergence vector space, $\Pi\in\cX$, and $\rho:\cX\rightarrow (-\infty,\infty]$ be a proper, star-shaped functional  satisfying   $l_{\rho}<1<u_{\rho} $. 
    \begin{enumerate}
        \item Suppose that $\cX$ is a normed space and $\eta$ is norm convergence. Then Theorem~\ref{valuesofch} implies that  $\Pi$-additive hulls preserve norm continuity and norm lower semicontinuity.
        \item Suppose that $\cX$ is a function space and $\eta$ is dominated a.s.\ convergence. By Theorem~\ref{valuesofch}, if $\rho$ has the Fatou property, then $\rho_\Pi$ has the Fatou property.  Moreover, the theorem also implies that if $\rho$ has the Lebesgue property, then $\rho_\Pi$ has the Lebesgue property. Recall that a functional $\psi$ is said to have the Lebesgue property if $\psi(X)=\lim_n\psi(X_n)$ whenever $X_n\xrightarrow{a.s.} X$ and $(X_n)$ is dominated in $\cX$.
        \item  Suppose that $\cX$ is a function space and $\eta$ is a.s.\ convergence. By Theorem~\ref{valuesofch}, if $\rho$ is prudent (equivalently, $\rho$ has the super Fatou property), then $\rho_\Pi$ is also prudent.
    \end{enumerate}
\end{example}

We now discuss a prominent example related to Example~\ref{conexlowersemi}(3), which was the original motivation for this section. 

We begin by recalling the definitions of Orlicz functions and Orlicz spaces.
A non-constant function $\Phi:[0,\infty)\to[0,\infty]$ is called an {\em Orlicz function} if it is convex, non-decreasing, left continuous  and satisfies $\Phi(0)=0$.
Let $(\Omega,\cF,\bP)$ be a non-atomic probability space. For $X\in L^0$, its {\em Orlicz premium} (or {\em Luxemburg norm}) is defined by
\begin{align*}
 \|X\|_\Phi := \inf\left\{\lambda\in(0,\infty) \,: \ \E\left[\Phi\left(\frac{|X|}{\lambda}\right)\right]\leq1\right\}.
\end{align*}
The elements in $L^0$ with finite Luxemburg norm form the Orlicz space $L^\Phi$, namely,  
\[
L^\Phi := \left\{X\in L^0 \,: \ \|X\|_\Phi<\infty\right\}.
\]

\begin{example}
Let  $\alpha\in(0,1)$ be given. As is standard in the literature on  Haezendonck--Goovaerts risk measures, we assume that $\Phi$  is normalized by $\Phi(1)=1$. We consider the Orlicz function $\Phi_\alpha:=\frac{\Phi}{1-\alpha}$ and define the map $\rho^\alpha:L^0\to[0,\infty]$ by
\[
\rho^\alpha(X) := \|X^+\|_{\Phi_\alpha}.
\]
Let $\Pi$ be the constant one function $\mathbf{1}$. For $X\in L^0$, its {\em Haezendonck--Goovaerts risk measure} is defined by $$\pi_\alpha(X):=\inf_{m\in\R}\{m+\rho^\alpha(X-m\mathbf{1})\}=\rho_\mathbf{1}^\alpha(X).$$

It was proved by Gao, Munari, and Xanthos \cite{GMX20} that the Haezendonck--Goovaerts risk measure has the Fatou property on $L^\Phi$, and it was asked therein whether it satisfies the super Fatou property on $L^\Phi$.
The question was recently answered affirmatively by Amarante and Liebrich in \cite{AL24} using dual representations.   
We observe that it is an immediate consequence of Theorem~\ref{valuesofch}. 
    
To this end, we first recall that $\rho^\alpha$ is prudent on $L^0$. 
For  completeness, we include the proof. First, let $(X_n)\subset L^0$ and $X\in L^0$ be such that $0\leq X_n\uparrow X$ almost surely.   
    We claim that $\norm{X}_{\Phi_\alpha}=\sup_n\norm{X_n}_{\Phi_\alpha}$.  Since $\Phi$ is non-decreasing,  $\sup_n\norm{X_n}_{\Phi_\alpha}\leq \norm{X}_{\Phi_\alpha}$.
 Suppose, by way of contradiction, that $\sup_n\norm{X_n}_{\Phi_\alpha}<\norm{X}_{\Phi_\alpha}$.   
Then there exists a real number $M>0$ such that 
    $\norm{X_n}_{\Phi_\alpha}\leq M$ for all $n\in\N$ and $\norm{X}_{\Phi_\alpha}> M$. It follows that 
    \begin{align}\label{shortorlicz}
        \E\left[\Phi_\alpha\left(\frac{X_n}{M}\right)\right]\leq1\;\;\text{for all } n\geq 1,\qquad \E\left[\Phi_\alpha\left(\frac{X}{M}\right)\right]>1.
    \end{align}
Since $0\leq  \frac{X_n}{M}\uparrow \frac{X}{M}$ a.s., the left continuity and monotonicity of $\Phi_\alpha$ imply that $$0=\Phi_\alpha(0)\leq \Phi_\alpha\Big(\frac{X_n}{M}\Big)\uparrow \Phi_\alpha\Big(\frac{X}{M}\Big)\;\;\text{a.s.}$$ 
Hence, by the Monotone Convergence Theorem,
    $\E\left[\Phi_\alpha\left(\frac{X_n}{M}\right)\right]\uparrow \E\left[\Phi_\alpha\left(\frac{X}{M}\right)\right]$, which contradicts \eqref{shortorlicz}. Therefore, $\|X\|_{\Phi_\alpha}=\sup_n\|X_n\|_{\Phi_\alpha}$.    
Next, assume that $X_n\xrightarrow{a.s.} X$. Let $Y_n=\inf_{k\geq n}X_k^+$ for any $n\geq 1$. Then $0\leq Y_n\uparrow X^+$ a.s.  
By the preceding claim, $$\norm{X^+}_{\Phi_\alpha}=\sup_n\norm{Y_n}_{\Phi_\alpha}\leq \sup_n\inf_{k\geq n}\norm{X_k^+}_{\Phi_\alpha}=\liminf_n\norm{X_n^+}_{\Phi_\alpha},$$
showing that $\rho^\alpha$ is prudent on $L^0$.

Next, note that  the Orlicz premium $\norm{\cdot}_{\Phi_\alpha}$ is  positive homogeneous on $L^0$.  Thus, $\rho^\alpha$ is also positive homogeneous and hence star-shaped on $L^0$. 
Moreover, a straightforward computation yields $\rho^\alpha(0)=0$ so that $l_{\rho^\alpha}=0<1$. Similarly,  one obtains  $$u_{\rho^\alpha}=\norm{\one}_{\Phi_\alpha}=\inf\Big\{\lambda>0:\Phi\Big(\frac{1}{\lambda}\Big)\leq 1-\alpha\Big\}>1.$$
For the last inequality, first observe that $\Phi(x)\leq 1$ if and only if $x\leq 1$, since $\Phi(1)=1$ and $\Phi$ is convex and non-decreasing. The inequality then follows from the left  continuity and monotonicity of $\Phi$.     Therefore, by Theorem~\ref{valuesofch}, $\pi_\alpha$ is prudent on $L^0$ and, in particular, on $L^\Phi$.     
\end{example}

It is worth noting that the argument in the preceding example does not rely on the convexity of $\Phi$. 
Recently, Ayg\"un, Bellini, and  Laeven \cite{ABL25} introduced a broad framework of Orlicz premia, which in turn yields  a corresponding generalization of Haezendonck--Goovaerts risk measures.  Following \cite{ABL25}, let $\Phi: [0, \infty) \to (-\infty,\infty]$ be non-decreasing and left continuous, and assume  that  $x \leq 1$ if and only if  $\Phi(x) \leq 1$\footnote{The framework of \cite{ABL25} differs from the setting considered here in that it also allows $\Phi(0)=-\infty$.}.  Assume also that  $\Phi(1)=1$.
The Orlicz premium $\norm{\cdot}_{\Phi_\alpha}$, the functional $\rho^\alpha$ and the Haezendonck--Goovaerts risk measure $\pi_\alpha$ are defined exactly as before.  The same arguments as in the preceding example show that $\rho^\alpha$ is prudent and star-shaped on $L^0$ and satisfies $$l_{\rho^\alpha}=0<1,\qquad u_{\rho^\alpha}=\norm{\one}_{\Phi_\alpha}=\inf\Big\{\lambda>0:\Phi\Big(\frac{1}{\lambda}\Big)\leq 1-\alpha\Big\}>1.$$
Hence, Theorem~\ref{valuesofch} implies that $\pi_\alpha$ is prudent on $L^0$.

\section{Prudence of Inf-Convolutions}
In this section, we show that prudence is preserved under inf-convolution for convex, cash-additive, law-invariant functionals.
Let $\cX$ be an r.i.\ space over a non-atomic probability space $(\Omega,\cF,\bP)$.
For two proper functionals $\rho_i:\cX\rightarrow (-\infty,\infty]$, $i=1,2$, their inf-convolution $\rho_1\square\rho_2:\cX\to[-\infty,\infty]$ is defined by 
$$\rho_1\square\rho_2(X):=\inf\{\rho_1(Y)+\rho_2(Z):X=Y+Z,\;Y\in\cX,\;Z\in\cX\},\qquad \forall X\in \cX. $$

When $\cX=L^p$,  Filipovi\'c and Svindland  \cite{FS08} showed that if $\rho_1,\rho_2:\cX\to(-\infty,\infty]$ are both convex, cash additive, law invariant, and have the Fatou property, then $\rho_1\square\rho_2:\cX\to(-\infty,\infty]$ is convex, cash additive,
law invariant, and has the Fatou property. See also Acciaio \cite{A09}, Chen, Gao and Xanthos \cite{CGX18}, and  Liu, Wang, and Wei \cite{LWW20} for further results on preservation of properties by inf-convolutions. 

In particular, the result of Filipovi\'c and   Svindland  was extended in \cite{CGX18} from $L^p$-spaces to arbitrary r.i.\ spaces, with the Fatou property replaced by the strong Fatou property. It was later shown in \cite[Theorem 3.3]{ML20} that, on every r.i.\ space other than $L^1$, the Fatou property and the strong Fatou property coincide for convex, law-invariant functionals. Combined with the $L^1$ case in \cite{FS08}, this yields that inf-convolutions of convex, cash-additive, law-invariant functionals on arbitrary r.i.\ spaces  preserve the Fatou property and continue to take values in $(-\infty,\infty]$. We now show that prudence is also preserved under inf-convolution for such functionals.

\begin{theorem}\label{inf-con}
 Suppose that $\cX$ is an r.i.\ space. Let $\rho_i:\cX\rightarrow (-\infty,\infty]$ be proper, convex, cash-additive, law-invariant, prudent functionals, $i=1,2$. Then their inf-convolution $\rho_1\square\rho_2$ is also prudent.     
\end{theorem}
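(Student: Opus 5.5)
We verify the super Fatou property of $\rho:=\rho_1\square\rho_2$, which is equivalent to prudence by \cite{AL24}. The plan is to use comonotone improvement to reduce to allocations of the form $(f(X_n),X_n-f(X_n))$ with $f$ $1$-Lipschitz, and then pass to the limit by compactness. Recall from the discussion preceding the theorem that $\rho$ takes values in $(-\infty,\infty]$. Each $\rho_i$ has the Fatou property, since the super Fatou property trivially implies it. Hence, by \cite[Theorem 3.1]{ML20}, each $\rho_i$ is convex, law invariant and $\sigma(\cX,L^\infty)$-lower semicontinuous. Consequently each $\rho_i$ is monotone with respect to the convex order (the mechanism behind \cite[Proposition 5.6]{BKMS21}): $\rho_i(Y)\le\rho_i(Y')$ whenever $Y\preceq_{cx}Y'$.

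\emph{Step 1 (comonotone allocations).} Fix $X\in\cX$ and any decomposition $X=Y+Z$ with $Y,Z\in\cX$. By the Landsberger--Meilijson comonotone improvement theorem, there are nondecreasing functions $f,g$ with $f+g=\mathrm{id}$ such that $f(X)\preceq_{cx}Y$ and $g(X)\preceq_{cx}Z$. Any such $f,g$ are automatically $1$-Lipschitz. Moreover $f(X),g(X)\in\cX$, because $\cX$ is r.i.\ and convex-order dominated elements of an r.i.\ space stay in it; alternatively, $|f(X)|\le |f(0)|+|X|$. Convex-order monotonicity then gives $\rho_1(f(X))+\rho_2(X-f(X))\le \rho_1(Y)+\rho_2(Z)$. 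Cash additivity lets us replace $f$ by $f-f(0)$ without changing the sum. Hence
\[
\rho(X)=\inf\big\{\rho_1(f(X))+\rho_2(X-f(X)) : f\in\mathcal{C}\big\},
\]
where $\mathcal{C}$ denotes the set of nondecreasing $1$-Lipschitz functions $f:\R\to\R$ with $f(0)=0$ and $\mathrm{id}-f$ nondecreasing.

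\emph{Step 2 (limit argument).} Let $X_n\to X$ a.s. If $\liminf_n\rho(X_n)=\infty$ there is nothing to prove. Otherwise, pass to a subsequence along which $\rho(X_n)$ converges to the liminf, which lies in $[-\infty,\infty)$. By Step 1, choose $f_n\in\mathcal{C}$ with
\[
\rho_1(f_n(X_n))+\rho_2(X_n-f_n(X_n))\le \rho(X_n)+\tfrac1n \ \text{ if } \rho(X_n)\in\R,
\]
and with the left-hand side at most $-n$ if $\rho(X_n)=-\infty$ (this case can in fact be excluded). The family $\mathcal{C}$ is equicontinuous and pointwise bounded, since $|f(t)|\le|t|$. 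By Arzel\`a--Ascoli and a diagonal argument, after passing to a further subsequence, $f_n\to f$ locally uniformly for some $f\in\mathcal{C}$. Then $f_n(X_n)\to f(X)$ a.s., because $|f_n(X_n)-f(X)|\le |X_n-X|+|f_n(X)-f(X)|$. Likewise $X_n-f_n(X_n)\to X-f(X)$ a.s., and $f(X)\in\cX$ since $|f(X)|\le|X|$.

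\emph{Step 3 (conclusion).} By the super Fatou property of each $\rho_i$,
\[
\liminf_n\rho_1(f_n(X_n))\ge\rho_1(f(X))>-\infty,\qquad \liminf_n\rho_2(X_n-f_n(X_n))\ge\rho_2(X-f(X))>-\infty.
\]
Both liminfs are therefore bounded below, so superadditivity of $\liminf$ applies with no $\infty-\infty$ issue. We obtain
\[
\rho(X)\le\rho_1(f(X))+\rho_2(X-f(X))\le\liminf_n\big(\rho_1(f_n(X_n))+\rho_2(X_n-f_n(X_n))\big)\le\lim_n\rho(X_n).
\]
In particular the limit is not $-\infty$, and the super Fatou inequality holds. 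Since we worked along an arbitrary subsequence realizing the liminf, this proves the claim for the original sequence.

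The main obstacle is Step 1. One must invoke a comonotone improvement valid for general $Y,Z\in\cX\subset L^1$, and justify the convex-order monotonicity of $\rho_i$ on an arbitrary r.i.\ space, including $L^1$, where the Fatou and strong Fatou properties may differ. The plan is to obtain monotonicity from the $\sigma(\cX,L^\infty)$-lower semicontinuity of \cite[Theorem 3.1]{ML20}, combined with the dilatation-type argument of \cite[Proposition 5.6]{BKMS21}.
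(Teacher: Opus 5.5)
Your proof is correct, but the limit passage differs genuinely from the paper's.

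\textbf{Common starting point.} Both arguments work with comonotone allocations $(f_n(X_n),X_n-f_n(X_n))$, where $f_n$ and $\mathrm{id}-f_n$ are nondecreasing and $f_n(0)=0$. The paper takes exactly optimal allocations from the existence results \cite[Theorem 2.5]{FS08} and \cite[Proposition 14]{CGX18}. You need only $\varepsilon$-optimal ones, which you obtain from comonotone improvement together with convex-order consistency.

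\textbf{The paper's limit step.} It applies Koml\'os' theorem to the random variables $f_n(X_n)$ and $g_n(X_n)$ in $L^1(\mu)$, for a finite measure $\mu$ equivalent to $\bP$. This gives a.s.\ convergence only of Ces\`aro means. Convexity of $\rho_1,\rho_2$ is therefore needed to compare $\rho_i$ of the averages with the averages of $\rho_i$. The domination $\abs{Y}\le\abs{X}$ is then used to place the limits in $\cX$.

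\textbf{Your limit step.} You use that the allocation rules are equi-Lipschitz and apply Arzel\`a--Ascoli to the functions $f_n$ themselves. The allocations then converge a.s.\ without averaging, $f_n(X_n)\to f(X)$, and the limit is again a comonotone allocation of $X$ lying in $\cX$. The limit step uses only the super Fatou property of each $\rho_i$ and superadditivity of $\liminf$.

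\textbf{What each route buys.} Your route is somewhat more economical. Convexity enters only through convex-order consistency, so the argument works for cash-additive, prudent functionals that are merely consistent with the convex order. Applying your Steps 2--3 to a constant sequence even reproves that the infimum defining $\rho_1\square\rho_2(X)$ is attained, and that it is not $-\infty$. The price is that you must cite two ingredients that the paper gets prepackaged in the optimal-allocation results it invokes:
\begin{enumerate}
\item the comonotone improvement theorem in its general, non-discrete form for integrable risks;
\item convex-order consistency of convex, law-invariant, $\sigma(\cX,L^\infty)$-lower semicontinuous functionals on an arbitrary r.i.\ space, including $L^1$.
\end{enumerate}
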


\begin{proof}
Let $(X_n)\subset \cX$ and $X\in\cX$ be such that $X_n\xrightarrow{a.s.}X$. We shall show  that  $\rho_1\square\rho_2(X)\leq \liminf_n\rho_1\square\rho_2(X_n)$. To this end, let   $m\in\R$  be  such that $\rho_1\square\rho_2(X_n)\leq m$ for all $n\in\N$. It suffices to prove that $\rho_1\square \rho_2(X)\leq m$.

Since prudence implies the Fatou property, by \cite[Theorem 2.5]{FS08} and \cite[Proposition 14]{CGX18}, for any $n\in\N$, there exist $f_n,g_n:\R\rightarrow \R$, both increasing, such that 
\begin{align}\label{inf-tech1}
f_n(x)+g_n(x)=x,\quad \forall x\in \R,
\end{align}
\begin{align*}
\rho_1\square\rho_2(X_n)=\rho_1(f_n(X_n))+\rho_2(g_n(X_n)).
\end{align*}
Since $\rho_i$'s are cash additive, by replacing $f_n$ and $g_n$ with $f_n-f_n(0)$ and $g_n-g_n(0)$, we may assume that 
\begin{align}\label{inf-tech3}
f_n(0)=g_n(0)=0.
\end{align}
Put $$Y_n=f_n(X_n),\quad Z_n=g_n(X_n).$$
By \eqref{inf-tech1}, $$X_n=Y_n+Z_n.$$
Since $f_n$'s and $g_n$'s are all increasing, by \eqref{inf-tech1} and \eqref{inf-tech3}, they all satisfy $\abs{f_n(x)}\leq \abs{x}$ and $\abs{g_n(x)}\leq \abs{x}$ for all $x\in\R$. 
It follows that 
\begin{align}\label{inf-tech4}
\abs{Y_n}\leq \abs{X_n},\quad \abs{Z_n}\leq \abs{X_n}.
\end{align}

Note that since $X_n\xrightarrow{a.s.}X$,  $\sup_{n\in\N}\abs{X_n}\in L^0$. Let $$\mathrm{d}\mu=\frac{1}{1+\sup_{n\in \N}{\abs{X_n}}}\mathrm{d}\bP.$$
Then $\mu$ is a finite measure on $(\Omega,\cF)$, equivalent to $\bP$.
Moreover, $\norm{Y_n}_{L^1(\mu)}\leq 1$ and $\norm{Z_n}_{L^1(\mu)}\leq 1$ for all $n\in\N$. Therefore, by applying Komlos' Theorem twice, we obtain subsequences $(Y_{n_k})$, $(Z_{n_k})$ and $Y,Z\in L^0$ such that 
$$\frac{1}{k}\sum_{j=1}^kY_{n_j}\xrightarrow{a.s.}Y,\qquad \frac{1}{k}\sum_{j=1}^kZ_{n_j}\xrightarrow{a.s.}Z.$$
Since $Y_{n_k}+Z_{n_k}=X_{n_k}\xrightarrow{a.s.}X$, we have $$Y+Z=X.$$
Moreover, by \eqref{inf-tech4}, $$\abs{Y}\leq \limsup_k\frac{1}{k}\sum_{j=1}^k\abs{Y_{n_j}}\leq\limsup_k\frac{1}{k}\sum_{j=1}^k\abs{X_{n_j}}=\abs{X}.$$
Since $\cX$ is an ideal of $L^0$, we obtain $Y\in \cX$. Similarly, $Z\in \cX$.

Using prudence and convexity of $\rho_i$'s, we obtain
\begin{align*}
\rho_1\square\rho_2(X)&\leq \rho_1(Y)+\rho_2(Z)\\
&\leq \liminf_k\rho_1\left(\frac{1}{k}\sum_{j=1}^k {Y_{n_j}}\right)+\liminf_k\rho_2\left(\frac{1}{k}\sum_{j=1}^k {Z_{n_j}}\right)\\
&\leq \liminf_k\frac{\sum_{j=1}^k {\rho_1(Y_{n_j}})}{k} + \liminf_k\frac{\sum_{j=1}^k {\rho_2(Z_{n_j}})}{k}\\
&\leq \liminf_k\frac{\sum_{j=1}^k \big(\rho_1(Y_{n_j})+\rho_2(Z_{n_j})\big)}{k} \\
&=\liminf_k\frac{\sum_{j=1}^k \rho_1\square \rho_2(X_{n_j})}{k} \\
&\leq m.
\end{align*}
This completes the proof.
\end{proof}

We conclude this section with two examples illustrating that the assumptions of prudence and cash additivity in Theorem~\ref{inf-con} must be imposed on both functionals in order for their inf-convolution to be prudent.

The first example shows that if only one of $\rho_1$ and $\rho_2$ is prudent, then their inf-convolution need not be prudent.

\begin{example}
Let $\cX$ be an r.i.\ space.  
Let $\rho_1:\cX\to(-\infty,\infty]$ be any proper, convex, cash-additive, law-invariant, and prudent functional.
Let $\rho_2:\cX\to \R$  be the expectation functional, that is,  $\rho_2(X) = \E[X]$ for all $X\in\cX$.  Then $\rho_2$ is convex, cash additive, law invariant, and satisfies the Fatou property, but it is not prudent.

We show that $ \rho_1\square\rho_2$ is not prudent at any $X\in \cX$ such that $\rho_1(X) <\infty$. Fix such an $X$. Choose a decreasing sequence of measurable sets $(A_n)_{n=1}^\infty$ with $\cap_{n=1}^\infty A_n= \emptyset$ and a sequence of real numbers $(t_n)_{n=1}^\infty$ such that  $$ t_n\bP(A_n)\to -\infty.$$
For each $n\geq 1$, define $$X_n = X + t_n\mathbf{1}_{A_n}.$$
Since $A_n\downarrow \emptyset$, we have $X_n\to X$ pointwise. Observe that
\[  \rho_1\square\rho_2(X_n) \leq\rho_1(X) + \rho_2(t_n\mathbf{1}_{A_n}) = \rho_1(X) +t_n\bP(A_n) \to -\infty.\]
Hence,  $\liminf_{n\to\infty} \rho_1\square\rho_2(X_n)=-\infty$. On the other hand, it follows from the discussion at the beginning of this section that  $\rho_1\square\rho_2(X)>-\infty$. Therefore, $\rho_1\square\rho_2(X)>\liminf_{n\to\infty} \rho_1\square\rho_2(X_n)$ and hence $\rho_1\square\rho_2$ is not prudent at $X$.
\end{example}

The next example shows that if only one of $\rho_1$ and $\rho_2$ is cash additive, then their inf-convolution may fail to be prudent.

\begin{example}
Let $\rho_1,\rho_2:L^\infty\to \R$ be  defined by
\[ \rho_1(X) = \E[X^+] \quad \text{and}\quad \rho_2(X) = \es X,\]
respectively.
Then both $\rho_1$ and $\rho_2$ are convex, law invariant, and prudent. Moreover, $\rho_2$ is cash additive, whereas $\rho_1$ is not. 

We show that $ \rho_1\square \rho_2$ is not prudent at any $X\in L^\infty$.
To this end, it suffices to prove that  $$\rho_1\square \rho_2(X) = \E[X],\qquad X\in L^\infty.$$
Fix $X\in L^\infty$. Choose $c\in\R$  such that $X\geq c $ a.s.
Then 
\[\rho_1\square \rho_2(X) \leq \rho_1(X-c\mathbf{1}) + \rho_2(c\mathbf{1}) = \E[X-c\mathbf{1}] + c = \E[X].\]
Conversely, let $Y\in L^\infty$ and set $c = \es Y$.
Since $Y\leq c$ a.s.,
\[
\E[(X-Y)^+] \geq \E[(X-c\mathbf{1})^+] \geq \E[X-c\mathbf{1}] = \E[X] -c.\]
Hence,
\[ \rho_1(X-Y) + \rho_2(Y) \geq \E[X] - c + c = \E[X].\]
Taking the infimum over $Y\in L^\infty$, we obtain  that $$\rho_1\square \rho_2(X) \geq \E[X].$$
Combining the two parts gives $\rho_1\square \rho_2(X) = \E[X]$. 
\end{example}


{\footnotesize

\begin{thebibliography}{99}
 
  \bibitem{A09}
Acciaio, B.: Short note on inf-convolution preserving the Fatou property, \emph{Annals of Finance} 5(2),  281--287 (2009).

\bibitem{AS14}
Ahn, J.Y., Shyamalkumar, N.D.: Asymptotic theory for the empirical Haezendonck--Goovaerts risk measure,
\emph{Insurance: Mathematics and Economics} 55, 78--90 (2014).

\bibitem{AL24}
Amarante, M., Liebrich, F.B.: Distortion risk measures: Prudence, coherence, and the expected shortfall,
\emph{Mathematical Finance} 34(4), 1291--1327 (2024).

\bibitem{ABL25}
Ayg\"un, M., Bellini, F., Laeven, R.J.: Generalized Orlicz premia,
arXiv:2507.09181 (2025).

\bibitem{BFWW22}Bellini, F., Fadina, T., Wang, R., Wei, Y.:  Parametric measures of variability induced by risk measures, \emph{Insurance, Mathematics and Economics} 106, 270-284 (2022)

\bibitem{BKMS21}
Bellini, F., Koch-Medina, P., Munari, C., Svindland, G.: Law-Invariant Functionals on General Spaces of Random Variables,
\emph{SIAM Journal on Financial Mathematics} 12(1), 318–341 (2021).

\bibitem{BR12}
Bellini, F., Rosazza Gianin, E.: Haezendonck--Goovaerts risk measures and Orlicz quantiles, {\em Insurance: Mathematics
and Economics} 51, 107-114 (2012).

\bibitem{CCMTW22}
Castagnoli, E., Cattelan, G., Maccheroni, F., Tebaldi, C., Wang, R.:
Star-shaped risk measures,
\emph{Operations Research} 70(5), 2637--2654 (2022).

\bibitem{CGLL22}
Chen, S., Gao, N., Leung, D., Li, L.:
Automatic Fatou property of law-invariant risk measures,
\emph{Insurance: Mathematics and Economics} 105, 41--53 (2022).

\bibitem{CGX18}
Chen, S., Gao,  N., Xanthos, F.: The strong Fatou property of risk measures, {\em Dependence Modeling} 6(1), 183-196 (2018).

\bibitem{D02}Delbaen, F.: Coherent risk measures on general probability spaces, In: \emph{Advances in finance and stochastics}, Springer Berlin Heidelberg, 2002, 1–37.

\bibitem{DPR10}
Dentcheva, D., Penev, S., Ruszczy\'nski, A.:
Kusuoka representation of higher order dual risk measures,
\emph{Annals of Operations Research} 181(1), 325--335 (2010).

\bibitem{FK07}
Filipovi\'c, D., Kupper, M.:
Monotone and cash-invariant convex functions and hulls,
\emph{Insurance: Mathematics and Economics} 41(1), 1--6 (2007).

\bibitem{FS08}
Filipovi\'c, D., Svindland, G.:
Optimal capital and risk allocations for law- and cash-invariant convex functions,
\emph{Finance and Stochastics} 12, 423--439 (2008).


\bibitem{GM20}
Gao, N., Munari, C.:
Surplus-invariant risk measures,
\emph{Mathematics of Operations Research} 45(4), 1342--1370 (2020).

\bibitem{GMX20}
Gao, N., Munari, C., Xanthos, F.:
Stability properties of Haezendonck--Goovaerts premium principles,
\emph{Insurance: Mathematics and Economics} 94, 94--99 (2020).


\bibitem{GLMX18}
Gao, N., Leung, D., Munari, C., Xanthos, F.:
Fatou property, representations, and extensions of law-invariant risk measures on general Orlicz spaces,
\emph{Finance and Stochastics} 22(2), 395--415 (2018).

\bibitem{GLX19}
Gao, N., Leung, D., Xanthos, F.:
Closedness of convex sets in Orlicz spaces with applications to dual representation of risk measures,
\emph{Studia Mathematica} 249, 329--347 (2019).

\bibitem{GX18}
Gao, N., Xanthos, F.: On the C-property and $w^*$-representations of risk measures, \emph{Mathematical Finance} 28(2), 748–754 (2018).

\bibitem{GX25}
Gao, N., Xanthos, F.:
A note on continuity and asymptotic consistency of measures of risk and variability,
\emph{ASTIN Bulletin: The Journal of the IAA} 55(1), 168--177 (2025).

\bibitem{JST06}
Jouini, E., Schachermayer, W., Touzi, N.: Law invariant risk measures have the Fatou Property, \emph{Advances in Mathematical Economics} 9, 49–71 (2006).

\bibitem{K07}
Krohmal, P.:
Higher moment coherent risk measures,
\emph{Quantitative Finance} 7(4), 373--387 (2007).

\bibitem{LGZ23}
Laeven, R.J., Rosazza Gianin, E., Zullino, M.:
Dynamic return and star-shaped risk measures via BSDEs,
arXiv:2307.03447 (2023).

\bibitem{LGZ24}
Laeven, R.J., Rosazza Gianin, E., Zullino, M.:
Law-invariant return and star-shaped risk measures,
\emph{Insurance: Mathematics and Economics} 117, 140--153 (2024).

\bibitem{L24}
Liebrich, F.:
Risk sharing under heterogeneous beliefs without convexity,
\emph{Finance and Stochastics} 28, 999--1033 (2024).

\bibitem{LM25}
Liebrich, F., Munari, C.: Revisiting the Automatic Fatou Property of Law-Invariant Functionals, \emph{SIAM Journal on Financial Mathematics} 16(1), 2025.

\bibitem{LWW20}
Liu, P., Wang, R., Wei, L.:
Is the inf-convolution of law-invariant preferences law-invariant?
\emph{Insurance: Mathematics and Economics} 91, 144--154 (2020).

\bibitem{BTW23}
O'Brien, M., Troitsky, V.G., van der Walt, J.H.:
Net convergence structures with applications to vector lattices,
\emph{Quaestiones Mathematicae} 46(2), 243--280 (2023).

\bibitem{RX20}
Rahsepar, M., Xanthos, F.:
On the extension property of dilatation monotone risk measures,
\emph{Statistics \& Risk Modeling} 37, 139--152 (2020).

\bibitem{R24}
Righi, M.B.:
Star-shaped acceptability indexes,
\emph{Insurance: Mathematics and Economics} 117, 170--181 (2024).

\bibitem{ML20}
Tantrawan, M., Leung, D.H.: On closedness of law-invariant convex sets in rearrangement invariant spaces. \emph{Arch. Math.} 114, 175–183 (2020).


\bibitem{WZ21}
Wang, R., Zitikis, R.:
An axiomatic foundation for the expected shortfall,
\emph{Management Science} 67(3), 1413--1429 (2021).

\bibitem{WMH24}
Wu, Q., Mao, T., Hu, T.:
Generalized optimized certainty equivalent with applications in the rank-dependent utility model,
\emph{SIAM Journal on Financial Mathematics} 15(1), 255--294 (2024).

 
\end{thebibliography}

}
\end{document}